\newcommand{\tm}{t}
\newcommand{\tmtwo}{s}
\newcommand{\tmthree}{u}
\newcommand{\tmfour}{r}
\newcommand{\var}{x}
\newcommand{\vartwo}{y}
\newcommand{\varthree}{z}
\newcommand{\varfour}{w}
\newcommand{\vartwop}{y'}
\newcommand{\varthreep}{z'}
\newcommand{\varset}{\Delta}
\newcommand{\varsettwo}{\Gamma}
\newcommand{\varsetthree}{\Sigma}
\newcommand{\alpeq}{=_\alpha}
\newcommand{\rootRew}[1]{\mapsto_{#1}}
\newcommand{\Rew}[1]{\rightarrow_{#1}}
\renewcommand{\to}{\Rew{}}
\newcommand{\lssym}{{\tt ls}}
\newcommand{\db}{{\tt dB}}
\newcommand{\vsym}{{\tt v}}
\newcommand{\val}{v}
\newcommand{\ctxholep}[1]{[#1]}
\newcommand{\ctxhole}{\ctxholep{\cdot}}
\newcommand{\evvctx}[1]{E_{#1}}
\newcommand{\evvctxtwo}[1]{F_{#1}}
\newcommand{\evvctxp}[2]{\evvctx{#1}\ctxholep{#2}}
\newcommand{\evvctxptwo}[2]{\evvctxtwo{#1}\ctxholep{#2}}
\newcommand{\apctxv}[1]{A_{#1}}
\newcommand{\apctxvtwo}[1]{B_{#1}}
\newcommand{\apctxvp}[2]{\apctxv{#1}\ctxholep{#2}}
\newcommand{\apctxvptwo}[2]{\apctxvtwo{#1}\ctxholep{#2}}
\newcommand{\nbctx}{N}
\newcommand{\nbctxtwo}{M}
\newcommand{\nbvctx}[1]{\nbctx_{#1}}
\newcommand{\nbvctxtwo}[1]{\nbvctxtwo{#1}}
\newcommand{\nbvctxp}[2]{\nbvctx{#1}\ctxholep{#2}}
\newcommand{\nbvctxptwo}[2]{\nbvctxtwo{#1}\ctxholep{#2}}
\newcommand{\sctx}{L}
\newcommand{\sctxtwo}{L'}
\newcommand{\sctxv}[1]{\sctx_{#1}}
\newcommand{\sctxvtwo}[1]{\sctxtwo_{#1}}
\newcommand{\sctxvp}[2]{\sctxv{#1}\ctxholep{#2}}
\newcommand{\sctxvptwo}[2]{\sctxvtwo{#1}\ctxholep{#2}}
\newcommand{\defeq}{:=}
\newcommand{\grameq}{::=}
\newcommand{\isub}[2]{\{#1/#2\}}
\newcommand{\rtodb}{\rootRew{\db}}
\newcommand{\todbp}[1]{\Rew{\db#1}}
\newcommand{\todb}{\todbp{}}
\newcommand{\llsub}{\l_{lsub}}
\newcommand{\rtols}{\rootRew{\lssym}}
\newcommand{\tols}{\Rew{\lssym}}
\newcommand{\tohl}{\multimap}
\newcommand{\towhl}{\multimap}
\newcommand{\towhldb}{\multimap_\db}
\newcommand{\towhlls}{\multimap_\lssym}
\newcommand{\rtowhls}{\hat{\multimap}_\lssym}
\newcommand{\pr}{P}
\newcommand{\prtwo}{Q}
\newcommand{\prthree}{R}
\newcommand{\inp}[2]{#1(#2)}
\newcommand{\out}[2]{\overline{#1}\langle#2\rangle}
\newcommand{\paral}{\ |\ }
\newcommand{\topi}{\Rew{\pi}}
\newcommand{\lamtopia}[2]{\red{\llbracket \black{#1}\rrbracket}_{#2}}
\newcommand{\llbrace}{\{ \kern -0.27em \vert}
\newcommand{\rrbrace}{\vert \kern -0.27em \}}
\newcommand{\lamtopiv}[2]{\red{\llbrace \black{#1}\rrbrace}^{#2}}
\newcommand{\streq}{\equiv}
\newcommand{\fn}[1]{{\tt fn}(#1)}
\newcommand{\topim}{\Rew{\otimes}}
\newcommand{\topils}{\Rew{!}}
\newcommand{\topid}{\Rightarrow}
\newcommand{\rtopidm}{\rootRew{\otimes}}
\newcommand{\rtopidls}{\rootRew{!}}
\newcommand{\topidm}{\Rightarrow_{\otimes}}
\newcommand{\topidls}{\Rightarrow_{!}}
\newcommand{\grammarpipe}{\mathrel{\big |}}
\renewcommand{\l}{\lambda}
\newcommand{\ie}{{\em i.e.}}
\newcommand{\ih}{\textit{i.h.}}
\newcommand{\fv}[1]{{\tt fv}(#1)}
\newcommand{\deff}[1]{\textbf{#1}}
\newcommand{\boldemph}[1]{\textbf{#1}}
\newcommand{\red}[1]{{\color{red} {#1}}}
\newcommand{\blue}[1]{{\color{blue} {#1}}}
\newcommand{\black}[1]{{\color{black} {#1}}}
\newcommand{\ignore}[1]{}
\newcommand{\sep}{\hspace*{0.5cm}}
\newcommand{\mathintitle}[2]{\texorpdfstring{#1}{#2}}
\newcommand{\myinput}[1]{\ifthenelse{\boolean{withimages}}{\input{#1}}{}}
\newcommand{\gerard}{{G{\'e}rard}}
\newcommand{\linlogic}{linear logic}
\newcommand{\set}[1]{\{#1\}}
\newcommand{\disunion}{\uplus}
\newcommand{\pns}{proof nets}
\newcommand{\Pns}{Proof nets}
\newcommand{\lv}{\l_{\beta v}}
\newcommand{\lvsub}{\l_{vsub}}
\newcommand{\lvker}{\l_{vker}}
\newcommand{\tolwa}{\multimap_{\vsym}}
\newcommand{\tolwdb}{\multimap_{\vsym\db}}
\newcommand{\rtolwv}{\rootRew{\lssym\vsym}}
\newcommand{\tolwv}{\multimap_{\vsym\lssym}}
\tikzset{
node distance=1.3cm, auto,
every node/.style={font=\tiny },
ocenter/.style={baseline={([yshift=-.5ex, xshift=-.5ex]current bounding box)}},  
labelBeginAbove/.style={postaction={decorate,decoration={markings,mark=at position 0 with {\node[inner sep= 0.6pt, above=1pt]{\tiny #1};}} } },
labelBeginBelow/.style={postaction={decorate,decoration={markings,mark=at position 0 with {\node[inner sep= 0.6pt, below=1pt]{\tiny #1};}}}},
labelEndAbove/.style={postaction={decorate,decoration={markings,mark=at position 1 with {\node[inner sep= 0.6pt, above=1pt]{\tiny #1};}}}},
labelEndBelow/.style={postaction={decorate,decoration={markings,mark=at position 1 with {\node[inner sep= 0.6pt, below=1pt]{\tiny #1};}}}},
labelEndRight/.style={postaction={decorate,decoration={markings,mark=at position 1 with {\node[inner sep= 0.6pt, right=1pt]{\tiny #1};}}}},
labelEndLeft/.style={postaction={decorate,decoration={markings,mark=at position 1 with {\node[inner sep= 0.6pt, left=1pt]{\tiny #1};}}}}
}
\theoremstyle{plain}
    \newtheorem{theorem}{Theorem}
    \newtheorem{lemma}[theorem]{Lemma}
\theoremstyle{definition}
    \newtheorem{definition}[theorem]{Definition}
\theoremstyle{remark}
\newcommand{\ovar}{\red{a}}
\newcommand{\ovartwo}{\red{b}}
\newcommand{\ovarthree}{\red{c}}
\newcommand{\ovarfour}{\red{d}}
\newcommand{\ovarfive}{\red{e}}
\newcommand{\ltm}[1]{\tm^{\ovar}}
\newcommand{\ltmtwo}[1]{\tmtwo^{\ovartwo}}
\newcommand{\ltmthree}[1]{\tmthree^{\ovarthree}}
\newcommand{\ltmfour}[1]{\tmfour^{\ovarfour}}
\newcommand{\remph}[1]{\blue{#1}}
\newcommand{\specialchannels}{\red{A}}
\newcommand{\specialv}{special}
\renewcommand{\rtowhls}{\rootRew{\lssym}}
\renewcommand{\ctxholep}[1]{\remph{\llparenthesis} #1 \remph{\rrparenthesis}}
\renewcommand{\nbctx}{\remph{N}}
\renewcommand{\nbctxtwo}{\remph{M}}
\renewcommand{\nbvctx}[1]{\nbctx_{\remph{#1}}}
\renewcommand{\nbvctxtwo}[1]{\nbctxtwo_{\remph{#1}}}
\renewcommand{\nbvctxp}[2]{\nbvctx{#1}\ctxholep{#2}}
\renewcommand{\nbvctxptwo}[2]{\nbvctxtwo{#1}\ctxholep{#2}}
\renewcommand{\evvctx}[1]{\remph{E_{#1}}}
\renewcommand{\evvctxtwo}[1]{\remph{F_{#1}}}
\renewcommand{\apctxv}[1]{\remph{A_{#1}}}
\renewcommand{\apctxvtwo}[1]{\remph{B_{#1}}}
\renewcommand{\sctx}{\remph{L}}
\renewcommand{\sctxtwo}{\remph{L'}}
\renewcommand{\sctxv}[1]{\sctx_{\remph{#1}}}
\renewcommand{\sctxvtwo}[1]{\sctxtwo_{\remph{#1}}}
\title{Evaluating functions as processes}
\author{Beniamino Accattoli\institute{Carnegie Mellon University - Pittsburgh, PA, US}}
\begin{document}
\maketitle
\begin{abstract}
A famous result by Milner is that the $\l$-calculus can be simulated inside the $\pi$-calculus. This simulation, however, holds only modulo strong bisimilarity on processes, i.e. there is a slight mismatch between $\beta$-reduction and how it is simulated in the $\pi$-calculus. The idea is that evaluating a $\l$-term in the $\pi$-calculus is like running an environment-based abstract machine, rather than applying ordinary $\beta$-reduction. In this paper we show that such an abstract-machine evaluation corresponds to linear weak head reduction, a strategy arising from the representation of $\l$-terms as linear logic proof nets, and that the relation between the two is as tight as it can be. The study is also smoothly rephrased in the call-by-value case, introducing a call-by-value analogous of linear weak head reduction.
\end{abstract}

\section*{Introduction}
A key result about the expressiveness of the $\pi$-calculus is that it can represent the $\l$-calculus, as it has been showed by Robin Milner \cite{DBLP:journals/mscs/Milner92}. During the nineties the relationship between the two systems has been explored in-depth, mostly by Davide Sangiorgi \cite{DBLP:journals/iandc/Sangiorgi94,DBLP:journals/mscs/Sangiorgi99} and \gerard\ Boudol \cite{DBLP:journals/iandc/BoudolL96,DBLP:journals/lisp/Boudol98}. Nowadays, it takes a relevant part in the standard reference for the $\pi$-calculus \cite{DBLP:books/daglib/0004377}, and in any introductory course about it. From the process calculus point of view, it helps in getting deeper insights into its theory, especially because the $\pi$-calculus is far less canonical then the $\l$-calculus. From the $\l$-calculus point of view, it provides new tools to analyze the behavior of $\l$-terms and the dynamics of $\beta$-reduction.
\begin{figure}
\centering
\begin{tabular}{c|cccccccccc}
a)
\begin{tikzpicture}[ocenter]
  \node (s) {\small $\tm$};
  \node (s1) [right of=s] {\small$\tmtwo$};
  \node (s2) [below of=s] {\small$\pr_\tm$};
  \node (t) [right of=s2] {\small$\pr_\tmtwo$};
  \draw[->, labelEndBelow=$\beta$] (s) to (s1);
  \draw[->,dashed] (s) to (s2);
  \draw[->, labelEndAbove=*, labelEndBelow=$\pi$] (s2) to  (t);
  \draw[->, dashed] (s1) to  (t);
\end{tikzpicture}
&
c)
\begin{tikzpicture}[ocenter]
  \node (s) {\small $\tm$};
  \node (s1) [right of=s] {\small$\tmtwo$};
  \node (s2) [below of=s] {\small$\pr_\tm$};
  \draw[-o] (s) to (s1);
  \draw[->,dashed] (s) to (s2);
\end{tikzpicture}
& $\Rightarrow$ &
\begin{tikzpicture}[ocenter]
  \node (s) {\small $\tm$};
  \node (s1) [right of=s] {\small$\tmtwo$};
  \node (s2) [below of=s] {\small$\pr_\tm$};
  \node (t) [right of=s2] {\small$\pr_\tmtwo$};
  \draw[-o] (s) to (s1);
  \draw[->,dashed] (s) to (s2);
  \draw[->, labelEndBelow=$\pi$] (s2) to  (t);
  \draw[->, dashed] (s1) to  (t);
\end{tikzpicture}\\\hline
b)
\begin{tikzpicture}[ocenter]
  \node (s) {\small $\tm$};  
  \node (s2) [below of=s] {\small$\pr_\tm$};
  \node (t) [right of=s2] {\small$\prtwo\sim\pr_\tmtwo$};
  \node (s1) [above of=t, right=3pt] {\small$\tmtwo$};
  \node at (t-|s1) [above =1pt](dummy) {};
  \draw[->, labelEndBelow=$\beta$] (s) to (s1);
  \draw[->,dashed] (s) to (s2);
  \draw[->, labelEndAbove=*, labelEndBelow=$\pi$] (s2) to  (t);
  \draw[->, dashed] (s1) to  (dummy);
\end{tikzpicture}

&
d)
\begin{tikzpicture}[ocenter]
  \node (s) {\small $\tm$};
  \node (s2) [below of=s] {\small$\pr_\tm$};
  \node (t) [right of=s2] {\small$\pr_\tmtwo$};
  \draw[->,dashed] (s) to (s2);
  \draw[->, labelEndBelow=$\pi$] (s2) to  (t);
\end{tikzpicture}
& $\Rightarrow \exists \tmtwo$ s.t. &
\begin{tikzpicture}[ocenter]
  \node (s) {\small $\tm$};  
  \node (s2) [below of=s] {\small$\pr_\tm$};
  \node (t) [right of=s2] {\small$\pr_\tmtwo$};
  \node (s1) [right of=s] {\small$\tmtwo$};
  \draw[-o] (s) to (s1);
  \draw[->,dashed] (s) to (s2);
  \draw[->, labelEndBelow=$\pi$] (s2) to  (t);
  \draw[->, dashed] (s1) to  (t);
\end{tikzpicture}

\end{tabular}
\caption{Diagrams describing the relationship between terms and processes.\label{fig:diagrams}}
\end{figure}
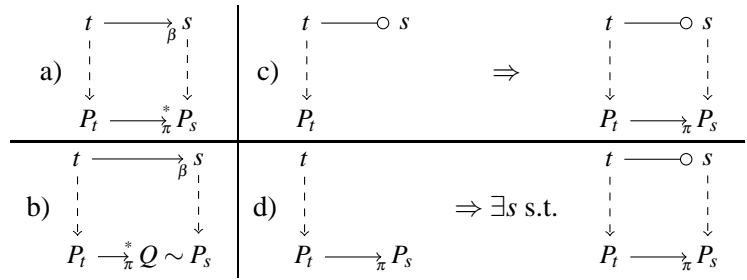

The idea is that the $\pi$-calculus can be considered as a sort of flexible abstract machine to which the $\l$-calculus can be compiled in various ways. There are in fact various encodings, each one corresponding to a particular evaluation strategy in the $\l$-calculus. In particular, Milner showed that Plotkin's call-by-name and call-by-value strategies \cite{DBLP:journals/tcs/Plotkin75} can be both faithfully represented.

The way in which the representation is \emph{faithful}, however, is quite subtle. It is looser than what one might expect, as the diagram in Figure \ref{fig:diagrams}.a \emph{does not hold}. It is only possible to get the diagram in Figure \ref{fig:diagrams}.b: $\pr_\tm$, the process representing $\tm$, does not reduce to $\pr_\tmtwo$, but to a process $\prtwo$ which is strongly bisimilar to $\pr_\tmtwo$. One might think that a better encoding could solve this problem, but this is a na\"ive expectation: the two systems compute in radically different ways, the mismatch is inherent. In Milner's result $\pr_\tmtwo$ and $\prtwo$ are strongly bisimilar, which means that they behave the same \emph{externally}, \ie\ in their interactions with every possible environment. However, the two processes behave in a quite different way \emph{internally}, \ie\ with respect to reductions. The discrepancy concerns the granularity of evaluation: $\l$-calculus uses a coarse, big-step substitution rule, while the $\pi$-calculus evaluates in small, fine-grained steps, as an 
abstract machine. Nonetheless, the evaluation of $\tm$ terminates if and only if the evaluation of the corresponding process $\pr_\tm$ terminates. In this sense, the representation is sometimes said to be sound and complete.

This paper refines the relationship between the $\l$-calculus and the $\pi$-calculus by extending the former with explicit substitutions---which may be considered as an alternative to abstract machines---in order to get a closer match of reduction steps. In the call-by-name case we show that the strategy corresponding to the evaluation in the $\pi$-calculus is exactly \emph{linear weak head reduction} $\tohl$, the small-step head strategy of linear logic proof nets \cite{DBLP:journals/tcs/MascariP94,DBLP:conf/rta/Accattoli12}. This notion of evaluation has connections with Krivine's abstract machine \cite{Danos04headlinear}, Bohm's separation theorem \cite{DBLP:journals/tcs/MascariP94}, computational complexity \cite{DBLP:conf/rta/AccattoliL12}, the geometry of interaction \cite{DBLP:journals/tcs/DanosR99}, game semantics \cite{DBLP:conf/lics/DanosHR96,DBLP:conf/fossacs/Clairambault11}, and the differential $\l$-calculus \cite{DBLP:conf/cie/EhrhardR06}. The relationship shown here is extremely strong. It is represented in the diagrams in Figure \ref{fig:diagrams}.c-d, which hold modulo structural equivalence only. They express the fact that the translation is a strong bisimulation \emph{with respect to reduction} (note that \emph{one} step maps to \emph{one} step, and vice-versa).

The relationship between the $\pi$-calculus and linear logic has been analyzed from various points of view \cite{DBLP:conf/elp/Miller92,DBLP:journals/tcs/Abramsky93,DBLP:journals/tcs/BellinS94,DBLP:journals/entcs/Beffara06,DBLP:journals/tcs/HondaL10,DBLP:journals/iandc/EhrhardL10,DBLP:conf/concur/CairesP10}. Our study essentially refines the work of Caires, Pfenning, and Toninho in \cite{DBLP:conf/fossacs/ToninhoCP12}, where the encodings of the $\l$-calculus in the $\pi$-calculus are re-understood as the encodings of $\l$-calculus into linear logic (due to Girard \cite{DBLP:journals/tcs/Girard87}, see also \cite{DBLP:journals/tcs/MaraistOTW99}). The refinement consists in looking to such encodings via linear logic proof nets, but replacing the explicit use of proof nets with the lighter and equivalent reformulations as calculi of explicit substitutions \emph{at a distance}, developed in \cite{DBLP:conf/csl/AccattoliG09,DBLP:conf/csl/AccattoliK10,phdaccattoli,DBLP:conf/flops/AccattoliP12,DBLP:conf/rta/Accattoli12,AccLSFA}.

\emph{Contributions}. In some sense there is not much original content in this paper. Damiano Mazza's master thesis \cite{mazzalinhead} (in French and unpublished) already developed the connection with linear weak head reduction. Similar ideas are sketched by Boudol in the introduction of \cite{DBLP:journals/lisp/Boudol98}. Also, Milner's seminal paper already suggested to use some environment device to refine the encodings, an idea that has then been explored by Vasconcelos \cite{DBLP:journals/jfp/Vasconcelos05} and recently by Cimini, Sacerdoti Coen, and Sangiorgi  \cite{DBLP:conf/tgc/CiminiCS10}. 

What is original here is the presentation. Our approach provides a remarkably compact development, confirming the relevance of explicit substitutions \emph{at a distance} as a very flexible syntactical tool. Our presentation simplifies in the extreme Mazza's study, by exploiting the simpler and more manageable reformulation of weak linear head reduction in the \emph{linear substitution calculus} \cite{DBLP:conf/rta/AccattoliL12,DBLP:conf/rta/Accattoli12}. In addition, by clarifying the connection with a crucial concept in the theory of linear logic, we get an important corollary \emph{for free}. In \cite{DBLP:conf/rta/AccattoliL12} it is proven that linear head reduction is at most quadratically longer than head reduction, and this result holds also with respect to the weak (\ie\ not under lambdas) variants of these reductions\footnote{The upper bound in \cite{DBLP:conf/rta/AccattoliL12} is exact, and it is based on a trasformation of reductions which applies to arbitrary 
reduction sequences, in particular even to non-terminating terms. For instance, the quadratic bound is reached by the evaluation of $(\l \var. \var\var)\l \var. \var\var$, which is weak.}. Plotkin's call-by-name strategy is the same thing as weak head reduction. Consequently, we get a quadratic relation between the call-by-name strategy and the evaluation in the $\pi$-calculus, which is a non-trivial quantitative refinement of Milner's result.

However, our contribution is not only about the presentation. The study of call-by-name is complemented by the study of a call-by-value encoding, from which we extract a call-by-value $\tolwa$ analogous of linear weak head reduction, which has never been considered before. We also show that this new strategy enjoys the analogous of the \emph{subterm property} \cite{DBLP:conf/rta/AccattoliL12} of linear weak head reduction, which is the basic property for complexity analysis. Last but not least, we give a presentation \emph{at a distance} of the rewriting rules of the $\pi$-calculus which is a contribution of independent interest.

Despite the compactness of the presentation, the details turned out to be quite delicate. The use of \emph{distance rules}, which are rewriting rules involving contexts (\ie\ terms with holes), is crucial. They reflect on terms the local rules of linear logic proof nets, and they are essential in order to get a strong bisimulation of reductions. These contexts can capture variables and names, a fact which requires a very careful analysis of the translations. This is why we present the proofs of the translation in details, almost certifying the result. Moreover, we use colors to ease the reading, so we suggest to read the paper simultaneously on paper and on a computer screen.

\emph{The relationship with \pns}. \Pns\ do not appear in this paper, we limit ourselves to the equivalent formulations as calculi at a distance. However, for the call-by-value calculus the detailed correspondence between terms and \pns\ can be found in \cite{AccLSFA} (which uses big-step rules, while here we use small-step rules), for call-by-name the interested reader may have a look to \cite{DBLP:conf/csl/AccattoliG09,phdaccattoli} (that do employ small-step rules, but in a slightly different way). On \pns, linear head reduction is the small step strategy which reduces only the cuts at level 0 which do not involve the auxiliary conclusions of $!$-boxes. The weak variant can be defined in exactly the same way if boxes are also used for $\parr$ (which in this context rather corresponds to the right rule for linear implication in intuitionistic \linlogic, and not to the $\parr$ of classical \linlogic). Using boxes for linear implication is less \emph{ad-hoc} than it may seem at first sight; a technical discussion of this issue is 
in Section 6 of \cite{AccLSFA}. This paper provides another justification for such boxes: they are needed to properly reflect evaluation in the $\pi$-calculus.

\emph{Plan of the paper.} Section \ref{s:ls-calculus} introduces the linear substitution calculus, and Section \ref{s:pi-calculus} introduces the presentation of the $\pi$-calculus that we use. Sections \ref{s:cbn} and \ref{s:cbv} study the call-by-name and the call-by-value encodings, respectively.

\emph{Acknowledgements.}
To Frank Pfenning, for having encouraged me to work out the details of this work, and to Damiano Mazza, for inspiration and comments on an early draft. This work was partially supported by the Qatar National Research Fund under grant NPRP 09-1107-1-168.

%
%
%
%
%

\section{The linear substitution calculus}
\label{s:ls-calculus}
The language of the \emph{linear substitution calculus} $\llsub$ is given by the following grammar for terms:
\begin{center}
 $\begin{array}{rcl}
\tm,\tmtwo,\tmthree,\tmfour&::=&  \var\mid \l \var. \tm \mid \tm \tmtwo\mid  \tm[\var/\tmtwo]
\end{array}$
\end{center}

\noindent The constructor $\tm[\var/\tmtwo]$ is called an \emph{explicit substitution} (of $\tmtwo$ for $\var$ in $\tm$, the usual (implicit) substitution is instead noted $\tm\isub{\var}{\tmtwo}$). Both $\l \var. \tm$ and $\tm[\var/\tmtwo]$ bind $\var$ in $\tm$. We are not going to define the full calculus (for which we refer to \cite{DBLP:conf/rta/AccattoliL12,DBLP:conf/rta/Accattoli12}), but only linear weak head reduction. However, let us point out that the linear substitution calculus is a variation over a calculus of explicit substitutions introduced by Robin Milner in \cite{DBLP:journals/entcs/Milner07}, to analyze the translation of $\l$-calculus to Bigraphs. 

 We shall use contexts extensively, so we define them formally. In particular, we need to specify the set $\varset$ of variables captured by a given context. A weak head context, or simply an \deff{evaluation context}, is a term of the following grammar (to ease the reading on screen all contexts will be in blue):
 \begin{center}
 $\begin{array}{rcl@{\hspace{3cm}}rcl}
\evvctx{\emptyset} &::=&  \ctxhole\mid \evvctx{\emptyset} \tm&
\evvctx{\varset\disunion\set{\var}} &::=& \evvctx{\varset}[\var/\tm] \mid \evvctx{\varset\disunion\set{\var}}\tm
\end{array}$
 \end{center}

\noindent A special case of evaluation context is given by \deff{substitution contexts}, noted $\sctxv{\varset}$ and defined by:
\begin{center}
$\begin{array}{rcl@{\hspace{3cm}}rcl}
\sctxv{\emptyset} &::=&  \ctxhole&
\sctxv{\varset\disunion\set{x}} &::=& \sctxv{\varset}[x/t]
\end{array}$
\end{center}

\begin{definition}
\deff{Linear weak head reduction} $\towhl$ is defined as the union of $\towhldb$ and $\towhlls$, which are given by the closure by evaluation contexts (\ie\ $\towhldb:=\evvctx{\varset}[\rtodb]$ and $\towhlls:=\evvctx{\varset}[\rtowhls]$) of the rules $\rtodb$ and $\rtowhls$ defined as:
\begin{center}
$\begin{array}{lll@{\hspace{2cm}}lll@{\hspace{.7cm}}l}
\sctxvp{\varset}{\l \var.\tm} \tmtwo & \rtodb &  \sctxvp{\varset}{\tm[\var/\tmtwo]}&
 \evvctxp{\varset}{\var}[\var/\tmtwo] &\rtowhls & \evvctxp{\varset}{\tmtwo}[\var/\tmtwo] & \mbox{with $\var\notin\varset$}\\
\end{array}$
\end{center}
\end{definition}
The rule $\rtowhls$ implicitly assumes the side-condition $\fv{\tmtwo}\cap\varset=\emptyset$. The assumption is implicit because it can always be guaranteed by $\alpha$-conversion: if $\tmthree=\evvctxp{\varset}{\var}[\var/\tmtwo]$ and $\fv{\tmtwo}\cap\varset\neq\emptyset$ then there exist a set of variables $\varsetthree$ and an evaluation context $\evvctxtwo{\varsetthree}$ s.t. $\tmthree\alpeq\evvctxptwo{\varsetthree}{\var}[\var/\tmtwo]$ and  $\fv{\tmtwo}\cap\varsetthree=\emptyset$. 

These rule are \emph{at a distance}, because their definition involves contexts, which is how locality on \pns\ is reflected on terms. In Milner's calculus the first rule does not use $\sctxvp{\varset}{\cdot}$. This is not a detail: the results in this paper would not hold with respect to Milner's original presentation.

It is natural to wonder in which sense the linear substitution calculus is \emph{linear}. In contrast to other linear calculi, variables may have multiple occurrences, and arguments are not forced to be used only once. A first superficial linear aspect of the calculus is that variable occurrences are substituted one at the time. A second much deeper aspect is that its head strategy---characterized by a factorization theorem in the same way as head reduction in $\l$-calculus \cite{DBLP:conf/rta/Accattoli12}---is \emph{linear head reduction}, whose main feature is the \emph{subterm property} (namely: any subterm $\tmthree$ which is duplicated at any point of a reduction $\tm \tohl^k\tmtwo$ is a subterm of $\tm$, whose size then does not depend on $k$) which implies that the implementation cost of \emph{every} step is linear (in the size of $\tm$, the parameter for complexity). This is a fundamental property, not enjoyed by any strategy in $\l$-calculus (for which the cost of one step is not even polynomial in the size of $\tm$), and which 
opens the way to the study of computational complexity \cite{DBLP:conf/rta/AccattoliL12}. Here we deal with linear \emph{weak} head reduction, which forbids reduction under abstractions. The restriction does not affect the subterm property.
\section{The \mathintitle{$\pi$}{pi}-calculus}
\label{s:pi-calculus}
The fragment of the $\pi$-calculus we use here is essentially the asynchronous calculus in \cite{DBLP:conf/csl/DeYoungCPT12} with both unary and binary inputs and outputs, morally corresponding to the exponential and the multiplicative connectives of linear logic (in the typed case of \cite{DBLP:conf/csl/DeYoungCPT12}) and without sums (which correspond to the additives). The only change is that we do not use their forwarding processes\footnote{Forwarding processes correspond to axioms in \linlogic. In terms of \pns, avoiding forwarding processes correspond to use an interaction nets presentation, \ie\ to work modulo cut-elimination on axioms.}. The grammar is:
\begin{center}
 $\begin{array}{rcl}
 \pr,\prtwo,\prthree&::=&  0\grammarpipe \out{\var}{\vartwo} \grammarpipe  \out{\var}{\vartwo,\varthree} \grammarpipe   \nu \var\pr \grammarpipe \inp{\var}{\vartwo,\varthree}.\pr \grammarpipe! \inp{\var}{\vartwo}.\pr \grammarpipe  \pr \paral \prtwo
 \end{array}$
\end{center}

\noindent We need a notion of context also for processes. A \deff{non-blocking context} is given by:
\begin{center}
$\begin{array}{rcl@{\sep\sep\sep\sep\sep}rcl}
\nbvctx{\emptyset}&::=&  \ctxhole\grammarpipe  \nbvctx{\emptyset}\paral Q \grammarpipe  P \paral \nbvctx{\emptyset} &
\nbvctx{\varset\disunion x}&::=&  \nu \var\nbvctx{\varset} \grammarpipe\nbvctxp{\emptyset}{\nbvctx{\varset\disunion x}}
\end{array}$
\end{center}

\noindent The language is considered modulo \deff{structural congruence}, \ie\ the minimum equivalence relation generated by the following rules and closed by non-blocking contexts:
\begin{center}
$\begin{array}{c@{\sep}c@{\sep}c@{\sep}c@{\sep}c@{\sep}c@{\sep}c}
\AxiomC{}
\UnaryInfC{$P\paral 0  \streq P$}
\DisplayProof	
& 
\AxiomC{}
\UnaryInfC{$P\paral (Q\paral R)\streq (P\paral Q)\paral R$}
\DisplayProof	
& 
\AxiomC{}
\UnaryInfC{$P\paral Q\streq Q\paral P$}
\DisplayProof	
\\\\
\AxiomC{}
\UnaryInfC{$\nu \var 0\streq 0$}
\DisplayProof	
& 
\AxiomC{$x\notin\fn{P}$}
\UnaryInfC{$\pr\paral \nu \var \prtwo\streq \nu \var. (\pr\paral \prtwo)$}
\DisplayProof	
& 
\AxiomC{}
\UnaryInfC{$\nu \var\nu \vartwo\pr\streq \nu \vartwo\nu\var\pr$}
\DisplayProof	
\end{array}$
\end{center}

In order to prove the simulation theorems we will use the following three properties of $\streq$, proved by easy inductions on $\nbvctx{\varset}$, $\pr$, and $\nbvctx{\varset}$, respectively (the set of free variables of a context is defined as for processes but using $\fn{\ctxhole}=\emptyset$).

\begin{lemma}
\label{l:pi-der-rules}
Let $\varset$ be a set of variables, $\nbvctx{\varset}$ a non-blocking context, $\pr$ a process s.t. $\fn{\pr}\cap\varset=\emptyset$, and $\var,\vartwo\notin\varset$. Then:
\begin{enumerate}
\item \label{p:pi-der-rules-one}$\nbvctxp{\varset}{\prtwo}\paral \pr\streq \nbvctxp{\varset}{\prtwo\paral \pr}$.
 \item \label{p:pi-der-rules-three}If $\var\notin\fn{\pr}$ then $\nu\var \pr\streq \pr$.
 \item \label{p:pi-der-rules-four}If $\var\notin\fn{\nbvctx{\varset}}$ then $\nu\var \nbvctxp{\varset}{\pr}\streq  \nbvctxp{\varset}{\nu\var\pr}$.
\end{enumerate}

\end{lemma}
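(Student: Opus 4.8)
The plan is to prove the three items by the structural inductions indicated right after the statement: by induction on $\nbvctx{\varset}$ for item~\ref{p:pi-der-rules-one}, by induction on $\pr$ for item~\ref{p:pi-der-rules-three}, and by induction on $\nbvctx{\varset}$ again for item~\ref{p:pi-der-rules-four}. All three are routine, but two bookkeeping points should be fixed first. The production $\nbvctxp{\emptyset}{\nbvctx{\varset\disunion x}}$ of the grammar of non-blocking contexts leaves the capture set unchanged, so the two inductions on contexts must be run on the number of constructors of $\nbvctx{\varset}$ rather than on the size of $\varset$ (and, modulo the harmless ambiguity in that production, in the corresponding case we may assume the outer context is not the bare hole, so that both subcontexts are strictly smaller). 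Moreover, no $\alpha$-conversion is needed anywhere: the restrictions surrounding the hole of $\nbvctx{\varset}$ bind exactly the names in $\varset$, and the hypotheses $\var,\vartwo\notin\varset$ and $\fn{\pr}\cap\varset=\emptyset$ are precisely what separates those bound names from $\var$, from $\vartwo$, and from the free names of the process that will be plugged into the hole.

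For item~\ref{p:pi-der-rules-three}, no real induction is needed: from the unit law for $\paral$, then scope extrusion (available because $\var\notin\fn{\pr}$), then the axiom $\nu\var 0\streq 0$, then the unit law again, one obtains
\[ \nu\var\pr \;\streq\; \nu\var(\pr\paral 0) \;\streq\; \pr\paral\nu\var 0 \;\streq\; \pr\paral 0 \;\streq\; \pr, \]
where each rewrite takes place inside a non-blocking context --- respectively $\nu\var\ctxhole$, the empty one, $\pr\paral\ctxhole$, and the empty one --- and is therefore licensed by closure of $\streq$ under non-blocking contexts. (A routine induction on $\pr$ also works, if one prefers to follow the stated plan literally.)

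For item~\ref{p:pi-der-rules-one} I would induct on $\nbvctx{\varset}$. The case $\nbvctx{\emptyset}=\ctxhole$ is immediate. The cases $\nbvctx{\emptyset}=\nbvctxtwo{\emptyset}\paral Q$ and $\nbvctx{\emptyset}=Q\paral\nbvctxtwo{\emptyset}$ use associativity and commutativity of $\paral$ to bring $\pr$ next to the hole, and then the induction hypothesis for $\nbvctxtwo{\emptyset}$ (whose side condition is vacuous) closed under the non-blocking context $\ctxhole\paral Q$, resp.\ $Q\paral\ctxhole$. When $\nbvctx{\varset}$ begins with a restriction, say $\nbvctx{\varset}=\nu z\,\nbvctxtwo{\varsetthree}$ with $\varset=\varsetthree\disunion\set{z}$, so that $z\in\varset$, the hypothesis $\fn{\pr}\cap\varset=\emptyset$ gives $z\notin\fn{\pr}$; thus one commutativity step followed by scope extrusion moves $\nu z$ past $\pr$, the induction hypothesis for $\nbvctxtwo{\varsetthree}$ applies (its side condition $\fn{\pr}\cap\varsetthree=\emptyset$ being implied by $\fn{\pr}\cap\varset=\emptyset$), and the result is closed back under $\nu z\,\ctxhole$. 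Finally, in the nesting case $\nbvctx{\varset}=\nbvctxtwo{\emptyset}\ctxholep{\nbvctxtwo{\varset}}$ the induction hypothesis is invoked twice: once for the strictly smaller outer context $\nbvctxtwo{\emptyset}$, with $\nbvctxptwo{\varset}{\prtwo}$ in the role of the process plugged into the hole, and once for the inner context $\nbvctxtwo{\varset}$ with $\prtwo$ plugged into the hole, this second equality being then closed under the non-blocking context $\nbvctxtwo{\emptyset}$; the two disjointness conditions required, $\fn{\pr}\cap\emptyset=\emptyset$ and $\fn{\pr}\cap\varset=\emptyset$, both hold.

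Item~\ref{p:pi-der-rules-four} is handled in the same style, now pushing $\nu\var$ inwards from the outside. The hole and parallel cases use the unit, commutativity and scope-extrusion axioms (the last one because $\var\notin\fn{Q}\subseteq\fn{\nbvctx{\varset}}$) together with the induction hypothesis closed under $\ctxhole\paral Q$; the restriction case $\nbvctx{\varset}=\nu z\,\nbvctxtwo{\varsetthree}$ commutes $\nu\var$ past $\nu z$ by the axiom $\nu\var\nu z\,\pr\streq\nu z\,\nu\var\pr$ --- legitimate since $z\in\varset$ and $\var\notin\varset$ force $z\neq\var$, whence $\var\notin\fn{\nbvctxtwo{\varsetthree}}$ also follows from $\var\notin\fn{\nbvctx{\varset}}$ --- and then applies the induction hypothesis for $\nbvctxtwo{\varsetthree}$ closed under $\nu z\,\ctxhole$; and the nesting case again splits into two uses of the induction hypothesis just as in item~\ref{p:pi-der-rules-one}, the hypotheses for the two subcontexts following from $\var\notin\varset$ and $\var\notin\fn{\nbvctx{\varset}}$ by the obvious inclusions. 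The whole argument is mechanical; the only point that deserves care is the nesting case in items~\ref{p:pi-der-rules-one} and~\ref{p:pi-der-rules-four}, where the recursion is not on an immediate subcontext and one has to be deliberate about which of the two contexts plays which role, which process is plugged where, and under which non-blocking context the inner equality is closed.
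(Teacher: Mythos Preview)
Your proposal is correct and follows exactly the scheme indicated in the paper, which simply says the three points are ``proved by easy inductions on $\nbvctx{\varset}$, $\pr$, and $\nbvctx{\varset}$, respectively'' without spelling out the cases. Your direct four-step derivation for item~\ref{p:pi-der-rules-three} (via the unit law, scope extrusion, and $\nu\var 0\streq 0$) is a pleasant shortcut over the induction on $\pr$ the paper suggests, and your remarks on well-foundedness in the nesting case are exactly the care that the paper's ``easy'' elides; the only cosmetic slip is reusing the letter $\nbctxtwo$ for both the outer and inner subcontexts in that case.
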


The rewriting rules are the following:
\begin{center}
$\begin{array}{lll@{\sep\sep\sep\sep}ccc}
\out{\var}{\vartwo,\varthree}\paral \inp{\var}{\vartwop,\varthree'}.\prtwo &\topim &\prtwo\isub{\vartwop}{\vartwo}\isub{\varthreep}{\varthree}&
\out{\var}{\vartwo}\paral !\inp{\var}{\varthree}.\prtwo &\topils & \prtwo\isub{\varthree}{\vartwo}\paral !\inp{\var}{\varthree}.\prtwo
\end{array}$
\end{center}

\noindent as usual they are both closed by non-blocking contexts and considered modulo $\equiv$. The second rule puts together replication and unary communication as in \cite{DBLP:conf/fossacs/ToninhoCP12,DBLP:conf/csl/DeYoungCPT12}.

\boldemph{$\pi$-calculus, at a distance.}
In order to simplify the proof of the bisimulation, we are going to use an alternative but equivalent definition of reduction in the $\pi$-calculus. Essentially, we have to reformulate the $\pi$-calculus \emph{at a distance}. The use of the structural equivalence in the definition of the rewriting relation of the $\pi$-calculus induces some annoying complications when one tries to reflect process reductions on terms. We are going to reformulate the reduction rules via non-blocking contexts, and get rid of structural equivalence.

The rewriting rules $\topidm$ and $\topidls$ are given by the closure by non-blocking contexts (but are not closed by structural congruence) of the following relations: if $\var\notin\varset\cup\varsettwo$ then
\begin{center}
$\begin{array}{l@{\sep}l@{\sep}l@{\sep}c@{\sep}c@{\sep}c}
\nbvctxp{\varset}{\out{\var}{\vartwo,\varthree}}\paral \nbvctxptwo{\varsettwo}{\inp{\var}{\vartwop,\varthreep}.\pr} 
&\rtopidm 
&\nbvctxptwo{\varsettwo}{\nbvctxp{\varset}{\pr\isub{\vartwop}{\vartwo}\isub{\varthreep}{\varthree}}}\\

\nbvctxp{\varset}{\out{\var}{\vartwo}}\paral \nbvctxptwo{\varsettwo}{!\inp{\var}{\varthree}.\pr} 
&\rtopidls 
&\nbvctxptwo{\varsettwo}{\nbvctxp{\varset}{ \pr\isub{\varthree}{\vartwo}\paral!\inp{\var}{\varthree}.\pr}}\\
\end{array}$
\end{center}

\noindent Actually, one should ask three futher conditions on variables: 1) $\varset\cap\varsettwo=\emptyset$; 2) $\varset\cap\fv{\pr}=\emptyset$; 3) $\fv{\nbvctx{\varset}}\cap\varsettwo=\emptyset$. It is easily seen, however, that these conditions can always be satisfied by choosing an $\alpha$-equivalent term, as it is the case for the $\rtols$ rule of $\llsub$.
Essentially, these rules re-formulate as reduction rules the $\tau$-transitions of the alternative presentation of the $\pi$-calculus as a labeled transition system, which is used to study the interaction of a process with its environment. Here, the new rules are more convenient than labeled transitions, because on $\l$-terms there is no analogous of the transitions whose label is not $\tau$ (and $\tau$-transitions are defined using the non-$\tau$ transitions). This reformulation is justified by the following lemma, whose proof is along the one of the harmony lemma in \cite{DBLP:books/daglib/0004377} (p. 51).

\begin{lemma}
\label{l:harmony}
\hfill
\begin{enumerate}
\item \boldemph{$\streq$ is a strong bisimulation with respect to $\topid$}: $\pr\streq\topidm\prtwo$ iff $\pr\topidm\streq\prtwo$, and $\pr\streq\topidls\prtwo$ iff $\pr\topidls\streq\prtwo$.
\item \boldemph{Harmony of $\topid$ and $\topi$}: \label{p:harmony-one}$\pr\topim\prtwo$ iff $\pr\topidm\streq\prtwo$, and $\pr\topils\prtwo$ iff $\pr\topidls\streq\prtwo$.
\end{enumerate}
\end{lemma}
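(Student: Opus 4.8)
The plan is to prove the two statements in sequence, using item~1 to prove item~2, and in each case to treat the multiplicative rules ($\topidm$, $\topim$) and the replication rules ($\topidls$, $\topils$) in parallel since the arguments are structurally identical. The overall template follows the proof of the harmony lemma in \cite{DBLP:books/daglib/0004377}; the genuinely new point is that our distance rules already absorb the structural manipulations (scope extrusion, associativity/commutativity of $\paral$, garbage collection of dead restrictions) that the textbook proof performs by hand, so what remains is to check that no structural rearrangement can do more than our non-blocking contexts allow.

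For item~1, I would argue that $\streq$ commutes with each $\topid$ rule. The direction "$\pr\streq\topidm\prtwo$ implies $\pr\topidm\streq\prtwo$" is proved by induction on the derivation of $\pr\streq\pr'$ where $\pr'\topidm\prtwo$. The key observation is that a redex for $\topidm$ is a parallel composition $\nbvctxp{\varset}{\out{\var}{\vartwo,\varthree}}\paral \nbvctxptwo{\varsettwo}{\inp{\var}{\vartwop,\varthreep}.\pr}$ sitting under a non-blocking context; I would show, case by case on the generating axioms of $\streq$ (closed under non-blocking contexts via Lemma~\ref{l:pi-der-rules}), that applying a structural rule either leaves the redex intact inside a possibly-different non-blocking context, or rearranges which $\nu$'s and $\paral$'s belong to $\nbvctx{\varset}$, $\nbvctxtwo{\varsettwo}$, or the outer context --- and in every case the contractum is again reachable by $\topidm$ and is $\streq$ to $\prtwo$. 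The reverse direction, and the $\topidls$ cases, are symmetric. Lemma~\ref{l:pi-der-rules}.\ref{p:pi-der-rules-one} is what lets a process $\pr$ with $\fn{\pr}\cap\varset=\emptyset$ be moved in and out of $\nbvctx{\varset}$, which is exactly the rearrangement a scope-extrusion step induces.

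For item~2, the harmony between $\topid$ and $\topi$, the right-to-left direction is immediate: by item~1, $\pr\topidm\streq\prtwo$ gives $\pr\streq\topidm'\streq\prtwo$ for some intermediate, and since $\topidm$ is a special case of $\topim$ up to $\streq$ (each distance redex is, after reassociating the two non-blocking contexts into a single structural-congruence rearrangement, an ordinary $\pi$-redex), we get $\pr\topim\prtwo$. For the left-to-right direction I would proceed by induction on the derivation of $\pr\topim\prtwo$. The base case is the ordinary communication axiom $\out{\var}{\vartwo,\varthree}\paral \inp{\var}{\vartwop,\varthreep}.\pr\topim\pr\isub{\vartwop}{\vartwo}\isub{\varthreep}{\varthree}$, which is the instance of $\rtopidm$ with $\varset=\varsettwo=\emptyset$ and empty contexts. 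The inductive cases are closure under non-blocking contexts (immediate, since $\topidm$ has the same closure) and closure under $\streq$: here I invoke item~1 to push the structural steps through. The main obstacle --- and the reason the textbook proof is delicate --- is precisely this last case: one must check that after a $\streq$ step the redex has not been "hidden" in a way that $\topidm$'s syntactic shape (two non-blocking contexts around an output and an input on a common free name $\var\notin\varset\cup\varsettwo$) cannot recover. The side conditions 1)--3) on the variables, which we already noted can be met by $\alpha$-renaming, are exactly what guarantees this; making that bookkeeping airtight, rather than any conceptual difficulty, is where the real work lies. The $\topils$/$\topidls$ case adds nothing new beyond carrying along the persistent copy $!\inp{\var}{\varthree}.\pr$.
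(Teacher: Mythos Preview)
Your plan is correct and follows the same approach as the paper, which does not give a detailed proof but simply states that the argument is ``along the one of the harmony lemma in \cite{DBLP:books/daglib/0004377} (p.~51)''; your elaboration of that template is sound. One small slip: in the right-to-left direction of item~2 you do not actually need item~1 --- it suffices to observe that every $\topidm$ step is already a $\topim$ step (using $\streq$ to bring the output and input side by side and back, which is absorbed since $\topim$ is closed under $\streq$), and then closure of $\topim$ under $\streq$ finishes it; item~1 is only needed for the left-to-right direction, where you use it exactly as you describe.
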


Curiously, the first formulation of the $\pi$-calculus was as a labeled transition system; the notions of reduction and structural congruence were introduced by Milner only later on, to study the relationship with the $\l$-calculus \cite{DBLP:journals/mscs/Milner92}. Our formulation at a distance of the $\pi$-calculus---motivated in exactly the same way---is a contribution of independent interest, probably the main one from the $\pi$-calculus point of view. It also shows that distance rules are a general syntactic principle whose relevance extends beyond explicit substitutions.
%
%
%
%
%
%
%
%
\section{The call-by-name encoding}
\label{s:cbn}
As for the ordinary $\l$-calculus, the translation from $\llsub$ to the $\pi$-calculus is parametrized by a special channel name $\ovar$. Actually, we assume that these \deff{\specialv\ channel names} are taken from a set $\specialchannels$ which is disjoint from the set of variable names, and whose elements are denoted $\ovar,\ovartwo,\ovarthree,\ovarfour,\ldots$.

The translation is given by (on screen it is in red):
\begin{center}
$\begin{array}{rcl@{\sep\sep\sep}rcl@{\sep}ll}
\lamtopia{\var}{\ovar} &\defeq&  \out{\var}{\ovar}&
\lamtopia{\tm\tmtwo}{\ovar} &\defeq& \nu \ovartwo\nu \var  (\lamtopia{\tm}{\ovartwo}\paral \out{\ovartwo}{\var,\ovar}\paral!\inp{\var}{\ovarthree}.\lamtopia{\tmtwo}{\ovarthree}) & \var \mbox{ is fresh}\\

\lamtopia{\l \var. \tm}{\ovar} &\defeq&\inp{\ovar}{\var, \ovartwo}.\lamtopia{\tm}{\ovartwo}&

\lamtopia{\tm[\var/\tmtwo]}{\ovar} &\defeq&\nu \var (\lamtopia{\tm}{\ovar}\paral !\inp{\var}{\ovartwo}.\lamtopia{\tmtwo}{\ovartwo})\\
\end{array}$
\end{center}

\noindent Modulo minor details, this is the original call-by-name encoding given by Milner. With respect to the relation with linear logic developed in \cite{DBLP:conf/csl/DeYoungCPT12}, \specialv\ names correspond exactly to multiplicative formulas, while variable names correspond to exponential formulas. 

An easy induction on the translation shows:

\begin{lemma}
\label{l:transl-free-names}
Let $\tm$ be a term. Then $\fn{\lamtopia{\tm}{\ovar}}=\fv{\tm}\disunion\set{\ovar}$.
\end{lemma}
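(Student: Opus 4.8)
The plan is to prove Lemma~\ref{l:transl-free-names} by structural induction on the term $\tm$, following the four clauses of the translation $\lamtopia{\cdot}{\ovar}$. The statement to establish is $\fn{\lamtopia{\tm}{\ovar}}=\fv{\tm}\disunion\set{\ovar}$; note the disjoint union, so along the way one also checks that the \specialv\ name $\ovar$ never appears among the free variable names $\fv{\tm}$, which is automatic since \specialv\ names live in the set $\specialchannels$, disjoint from the variable names.

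The four cases go as follows. For $\tm=\var$ we have $\lamtopia{\var}{\ovar}=\out{\var}{\ovar}$, whose free names are exactly $\set{\var,\ovar}=\fv{\var}\disunion\set{\ovar}$. For $\tm=\l\var.\tmtwo$ we have $\lamtopia{\l\var.\tmtwo}{\ovar}=\inp{\ovar}{\var,\ovartwo}.\lamtopia{\tmtwo}{\ovartwo}$; by the induction hypothesis $\fn{\lamtopia{\tmtwo}{\ovartwo}}=\fv{\tmtwo}\disunion\set{\ovartwo}$, and the input prefix on $\ovar$ binds $\var$ and $\ovartwo$, so the free names are $(\fv{\tmtwo}\setminus\set{\var})\disunion\set{\ovar}=\fv{\l\var.\tmtwo}\disunion\set{\ovar}$, using that $\ovartwo$ is fresh hence not in $\fv{\tmtwo}$. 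For $\tm=\tmtwo[\var/\tmthree]$ we have $\lamtopia{\tmtwo[\var/\tmthree]}{\ovar}=\nu\var(\lamtopia{\tmtwo}{\ovar}\paral !\inp{\var}{\ovartwo}.\lamtopia{\tmthree}{\ovartwo})$; applying the induction hypothesis to both $\tmtwo$ (with $\ovar$) and $\tmthree$ (with $\ovartwo$), taking the union, and then removing the $\nu\var$-bound name $\var$ and the input-bound name $\ovartwo$, yields $(\fv{\tmtwo}\setminus\set{\var})\cup\fv{\tmthree}\cup\set{\ovar}=\fv{\tmtwo[\var/\tmthree]}\disunion\set{\ovar}$. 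The application case $\tm=\tmtwo\tmthree$ is the bookkeeping-heaviest one: $\lamtopia{\tmtwo\tmthree}{\ovar}=\nu\ovartwo\nu\var(\lamtopia{\tmtwo}{\ovartwo}\paral\out{\ovartwo}{\var,\ovar}\paral !\inp{\var}{\ovarthree}.\lamtopia{\tmthree}{\ovarthree})$; the induction hypothesis gives $\fv{\tmtwo}\disunion\set{\ovartwo}$ and $\fv{\tmthree}\disunion\set{\ovarthree}$ for the two translated subterms, the middle output contributes $\set{\ovartwo,\var,\ovar}$, and the two restrictions $\nu\ovartwo$ and $\nu\var$ together with the input binder on $\ovarthree$ remove exactly $\ovartwo$, $\var$ (which is fresh, so not in $\fv{\tmtwo}\cup\fv{\tmthree}$), and $\ovarthree$, leaving $\fv{\tmtwo}\cup\fv{\tmthree}\cup\set{\ovar}=\fv{\tmtwo\tmthree}\disunion\set{\ovar}$.

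I do not expect any serious obstacle: the induction is routine, and the only points requiring a little care are the handling of the freshness conditions (``$\var$ is fresh'' in the application and abstraction clauses, and the fresh auxiliary \specialv\ names $\ovartwo,\ovarthree$), which guarantee that the names removed by the binders do not accidentally coincide with free variables coming from the subterms and thus that the unions remain disjoint as claimed. The mild subtlety worth a sentence in the writeup is that each clause of the translation introduces its own fresh names, so when invoking the induction hypothesis at index $\ovartwo$ (or $\ovarthree$) one silently uses that such a name is not already free in the relevant subterm; this is exactly the freshness side-condition annotated in the definition of $\lamtopia{\cdot}{\cdot}$, and it is what keeps the statement's disjoint union honest in every case.
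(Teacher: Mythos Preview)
Your proposal is correct and follows exactly the approach the paper indicates: the paper merely states that the lemma is shown by ``an easy induction on the translation'' without spelling out the cases, and your four cases are precisely that induction carried out in detail. The only minor remark is that in the substitution case you silently use $\var\notin\fv{\tmthree}$ (so that the outer $\nu\var$ does not accidentally capture a free occurrence coming from $\tmthree$), which is the usual Barendregt convention and entirely standard here.
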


To relate terms and processes we need to prove a property of the translation, concerning its action on contexts: it maps evaluation contexts to non-guarding contexts of a special form.

\begin{lemma}[Relating $\evvctx{}$ and $\nbvctx{}$ via $\lamtopia{\cdot}{\ovar}$]
\label{l:wh-to-nb}
Let $\varset$ be a set of variable names, $\evvctx{\varset}$ an evaluation context, and $\ovar$ a \specialv\ name. There exist a set of names $\varsettwo$ (possibly containing both variables and \specialv\ names), a non-blocking context $\nbvctx{\varset\disunion\varsettwo}$ and a \specialv\ name $\ovartwo$ s.t. $\lamtopia{\evvctxp{\varset}{\tm}}{\ovar}= \nbvctxp{\varset\disunion\varsettwo}{\lamtopia{\tm}{\ovartwo}}$ and $\varsettwo\cap\fv{\tm}=\emptyset$ for every term $\tm$. Moreover, if $\evvctx{\varset}$ is a substitution context $\sctxv{\varset}$ then $\ovar=\ovartwo$, $\varsettwo=\emptyset$, and $\nbvctx{\varset}$ does not depend on $\ovar$.
\end{lemma}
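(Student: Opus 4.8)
The plan is to prove the statement by structural induction on the evaluation context $\evvctx{\varset}$, following the two-clause grammar that defines evaluation contexts. The base case is $\evvctx{\emptyset}=\ctxhole$: here we take $\varsettwo=\emptyset$, $\ovartwo=\ovar$, and $\nbvctx{\emptyset}=\ctxhole$, and the required equality $\lamtopia{\tm}{\ovar}=\lamtopia{\tm}{\ovar}$ is trivial; the hole is a substitution context, and all the extra conditions of the "moreover" clause hold vacuously. The inductive step splits according to which production built the context.

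For the application case $\evvctx{\varset}\tm'$ (with $\varset$ arbitrary), I would unfold the translation of an application: $\lamtopia{\evvctxp{\varset}{\tm}\,\tm'}{\ovar}=\nu\ovarthree\,\nu y\,(\lamtopia{\evvctxp{\varset}{\tm}}{\ovarthree}\paral \out{\ovarthree}{y,\ovar}\paral !\inp{y}{\ovarfour}.\lamtopia{\tm'}{\ovarfour})$ for a fresh $y$ and fresh \specialv\ name $\ovarthree$. By the induction hypothesis applied to $\evvctx{\varset}$ and the \specialv\ name $\ovarthree$, there are $\varsettwo$, a non-blocking context $\nbvctx{\varset\disunion\varsettwo}$, and a \specialv\ name $\ovartwo$ with $\lamtopia{\evvctxp{\varset}{\tm}}{\ovarthree}=\nbvctxp{\varset\disunion\varsettwo}{\lamtopia{\tm}{\ovartwo}}$ and $\varsettwo\cap\fv{\tm}=\emptyset$. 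Plugging this in, I obtain $\nu\ovarthree\,\nu y\,(\nbvctxp{\varset\disunion\varsettwo}{\lamtopia{\tm}{\ovartwo}}\paral \out{\ovarthree}{y,\ovar}\paral !\inp{y}{\ovarfour}.\lamtopia{\tm'}{\ovarfour})$. Now I want to push the two extra parallel components and the two $\nu$-binders "around" $\nbvctx{\varset\disunion\varsettwo}$ so that what remains immediately above the hole is exactly $\lamtopia{\tm}{\ovartwo}$. The component $\out{\ovarthree}{y,\ovar}\paral !\inp{y}{\ovarfour}.\lamtopia{\tm'}{\ovarfour}$ can be absorbed into the non-blocking context using the grammar clause $\nbvctx{\emptyset}\paral Q$ (it is a process, whose free names I must check are disjoint from $\varset\disunion\varsettwo$ — true for $\ovarthree$ since it is fresh, for $\ovar$ since $\ovar$ is a \specialv\ name not among the captured variables, and for $y$ and the names of $\tm'$ after a harmless $\alpha$-renaming; this is exactly where freshness of $y,\ovarthree$ is used), and the two binders $\nu\ovarthree$, $\nu y$ can be prepended via the clause $\nu\var\nbvctx{\varset}$. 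So the new non-blocking context is $\nu\ovarthree\,\nu y\,(\nbctx'\paral (\out{\ovarthree}{y,\ovar}\paral !\inp{y}{\ovarfour}.\lamtopia{\tm'}{\ovarfour}))$ where $\nbctx'=\nbvctx{\varset\disunion\varsettwo}$, capturing $\varset\disunion(\varsettwo\cup\set{\ovarthree,y})$; I set the enlarged extra-name set to $\varsettwo\cup\set{\ovarthree,y}$, keep $\ovartwo$ as the output \specialv\ name, and note that the new extra names are still disjoint from $\fv{\tm}$ ($\varsettwo$ by IH, $\ovarthree,y$ by freshness). Crucially the equality must be a genuine syntactic identity, not merely structural congruence, which is why I am inserting the binders and parallel components literally in the positions prescribed by the non-blocking-context grammar rather than invoking $\streq$.

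For the explicit-substitution case $\evvctx{\varset}[\var/\tm']$, which only occurs in the second grammar clause and takes the captured set from $\varset$ to $\varset\disunion\set{\var}$, I unfold $\lamtopia{\evvctxp{\varset}{\tm}[\var/\tm']}{\ovar}=\nu\var\,(\lamtopia{\evvctxp{\varset}{\tm}}{\ovar}\paral !\inp{\var}{\ovartwo'}.\lamtopia{\tm'}{\ovartwo'})$, apply the induction hypothesis to $\evvctx{\varset}$ with the \emph{same} \specialv\ name $\ovar$ to rewrite $\lamtopia{\evvctxp{\varset}{\tm}}{\ovar}$ as $\nbvctxp{\varset\disunion\varsettwo}{\lamtopia{\tm}{\ovartwo}}$, and then build the new non-blocking context as $\nu\var\,(\nbvctx{\varset\disunion\varsettwo}\paral\, !\inp{\var}{\ovartwo'}.\lamtopia{\tm'}{\ovartwo'})$, which is well-formed by the clauses $\nbvctx{\emptyset}\paral Q$ (again checking free names: $\var\notin\varset\disunion\varsettwo$ because $\var$ is freshly bound, names of $\tm'$ disjoint by $\alpha$-renaming) followed by $\nu\var\nbvctx{\varset'}$. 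This captures $\set{\var}\disunion\varset\disunion\varsettwo$, so relative to the target set $\varset\disunion\set{\var}$ the extra-name set is still exactly $\varsettwo$, disjoint from $\fv{\tm}$ by the IH, and $\ovartwo$ is unchanged. The remaining grammar production $\evvctx{\varset\disunion\set{\var}}\tm'$ is handled exactly like the application case above. For the "moreover" clause, one observes that substitution contexts are generated only by the productions $\ctxhole$ and $\sctxv{\varset}[\var/\tm']$, i.e. they never use the application production; restricting the induction to these two cases, the \specialv\ name is never changed (only the explicit-substitution case arises, and it passes $\ovar$ through unchanged), no fresh \specialv\ name $\ovarthree$ or fresh variable $y$ is ever introduced so $\varsettwo$ stays empty, and since the translation clause for explicit substitution does not mention $\ovar$ at all except in the recursive call on the body, an easy secondary induction shows $\nbvctx{\varset}$ does not depend on $\ovar$.

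The main obstacle I anticipate is bookkeeping the side conditions on names so that every application of a non-blocking-context grammar clause is legitimate and the claimed equality is on the nose rather than up to $\streq$. Concretely: when absorbing a process $Q$ via $\nbvctx{\emptyset}\paral Q$ into a context $\nbvctx{\varset'}$ that captures $\varset'$, one needs $\fn{Q}\cap\varset'=\emptyset$, and since $\varset'$ accumulates the freshly introduced binders $\ovarthree,y$ (and the $\var$'s from explicit substitutions) as the induction proceeds, I must be careful that these are chosen fresh not only for the immediate term but globally along the recursion — this is the usual Barendregt-convention hygiene, and since the lemma must hold "for every term $\tm$" placed in the hole, the freshness must also avoid $\fv{\tm}$, which is precisely the content of the $\varsettwo\cap\fv{\tm}=\emptyset$ conclusion that the induction is designed to maintain. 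Everything else is a routine unfolding of the translation clauses and of the non-blocking-context grammar.
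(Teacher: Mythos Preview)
Your proof is correct and follows essentially the same route as the paper: structural induction on $\evvctx{\varset}$, with the base case $\ctxhole$ trivial, the application case handled by unfolding the translation, applying the IH at the inner special name, and wrapping the result with the two $\nu$-binders and the parallel components to form the new non-blocking context (enlarging $\varsettwo$ by the two fresh names), and the substitution case handled by unfolding, applying the IH at the \emph{same} special name $\ovar$, and wrapping with $\nu\var(\ctxhole\paral\cdots)$; the ``moreover'' clause is then immediate since substitution contexts only use the latter production. One small over-caution: when you absorb $Q=\out{\ovarthree}{y,\ovar}\paral!\inp{y}{\ovarfour}.\lamtopia{\tm'}{\ovarfour}$ into the context you worry about $\fn{Q}\cap(\varset\disunion\varsettwo)=\emptyset$, but the grammar of non-blocking contexts imposes no such side condition (the clause $\nbvctxp{\emptyset}{\nbvctx{\varset\disunion x}}$ with outer context $\ctxhole\paral Q$ works for arbitrary $Q$), so that check is unnecessary---though harmless.
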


\proof
By induction on $\evvctx{\varset}$. The base case is given by the empty context $\evvctx{\emptyset}=\ctxhole$, and it is trivial, just take $\varsettwo:=\emptyset$, $\nbvctx{\emptyset}:=\ctxhole$, and $\ovartwo=\ovar$. The inductive cases:
\begin{itemize}
\item \emph{Left of an application}, $\evvctx{\varset}=\evvctxtwo{\varset} \tmtwo$: if $\var$ is a fresh variable name:
\begin{center}
$\begin{array}{llllllll}
\lamtopia{\evvctxp{\varset}{\tm}}{\ovar} &=&\lamtopia{\evvctxptwo{\varset}{\tm} \tmtwo}{\ovar} 
&=& \nu \ovarfour\nu \var  (\lamtopia{\evvctxptwo{\varset}{\tm}}{\ovarfour}\paral \out{\ovarfour}{\var,\ovar}\paral!\inp{\var}{\ovarthree}.\lamtopia{\tmtwo}{\ovarthree})\\
&&&=_{\ih}& \nu \ovarfour\nu \var  (\nbvctxptwo{\varset\disunion\varsetthree}{\lamtopia{\tm}{\ovartwo}} \paral \out{\ovarfour}{\var,\ovar}\paral!\inp{\var}{\ovarthree}.\lamtopia{\tmtwo}{\ovarthree})
&=& \nbvctxp{
\varset\disunion\varsetthree\disunion\set{\ovarfour,\var}}{\lamtopia{\tm}{\ovartwo}}
\end{array}$
\end{center}

\noindent By \ih\ we get that $\varsetthree\cap\fv{\tm}=\emptyset$. By definition of the translation $\var$ is fresh, so $x\notin \fv{\tm}$. We then conclude by taking $\varsettwo:=\varsetthree\disunion\set{\ovarfour,\var}$.

\item \emph{Left of a substitution}, $\evvctx{\varset\disunion\set{\var}}=\evvctxtwo{\varset} [\var/\tmtwo]$:
\begin{center}
$\begin{array}{llllllllll}
\lamtopia{\evvctxp{\varset\disunion\set{\var}}{\tm}}{\ovar} &=&\lamtopia{\evvctxptwo{\varset}{\tm} [\var/\tmtwo]}{\ovar} 
&=&\nu \var (\lamtopia{\evvctxptwo{\varset}{\tm}}{\ovar}\paral !\inp{\var}{\ovarthree}.\lamtopia{\tmtwo}{\ovarthree})\\
&&&=_{\ih}&\nu \var (\nbvctxptwo{\varset\disunion\varsettwo}{\lamtopia{\tm}{\ovartwo}}\paral !\inp{\var}{\ovarthree}.\lamtopia{\tmtwo}{\ovarthree})
&=& \nbvctxp{\varset\disunion\set{\var}\disunion\varsettwo}{\lamtopia{\tm}{\ovartwo}}
\end{array}$
\end{center}

\noindent and the \ih\ also gives $\varsettwo\cap\fv{\tm}=\emptyset$.

Now suppose that $\evvctx{\varset\disunion\set{\var}}$ (and thus $\evvctxtwo{\varset}$) is a substitution context $\sctxv{\varset}$. Then by \ih\ we get $\nbvctxtwo{\varset}$ not depending on $\ovar$ s.t.:
\begin{center}
$\begin{array}{lllllllll}
\lamtopia{\evvctxp{\varset\disunion\set{\var}}{\tm}}{\ovar} &=&\lamtopia{\evvctxptwo{\varset}{\tm} [\var/\tmtwo]}{\ovar} 
&=&\nu \var (\lamtopia{\evvctxptwo{\varset}{\tm}}{\ovar}\paral !\inp{\var}{\ovarthree}.\lamtopia{\tmtwo}{\ovarthree})\\
&&&=_{\ih}&\nu \var (\nbvctxptwo{\varset}{\lamtopia{\tm}{\ovar}}\paral !\inp{\var}{\ovarthree}.\lamtopia{\tmtwo}{\ovarthree})
&=& \nbvctxp{\varset\disunion\set{\var}}{\lamtopia{\tm}{\ovar}}
\end{array}$
\end{center}

\noindent Where clearly $\nbvctx{\varset\disunion\set{\var}}$ does not depend on $\ovar$.
\qed
\end{itemize}

We can now proceed with the simulation.

\begin{theorem}[$\topi$ strongly simulates $\towhl$ via $\lamtopia{\cdot}{\ovar}$]
\label{tm:simulation}
\hfill
\begin{enumerate}
\item \label{p:simulation-one}$\tm\towhldb \tmtwo$ implies $\lamtopia{\tm}{\ovar} \topidm \streq\lamtopia{\tmtwo}{\ovar} $.
\item \label{p:simulation-two} $\tm\towhlls \tmtwo$ implies $\lamtopia{\tm}{\ovar} \topidls\streq\lamtopia{\tmtwo}{\ovar} $.
\end{enumerate}
\end{theorem}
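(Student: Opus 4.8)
\medskip
\noindent\emph{Proof plan.} I would prove the two items together, first reducing to ``root'' redexes and then lifting. Write $\tm=\evvctxp{\varset}{u}$ and $\tmtwo=\evvctxp{\varset}{u'}$, where $u\rtodb u'$ (resp.\ $u\rtowhls u'$) is the contracted rule. By Lemma~\ref{l:wh-to-nb} there are a set $\varsettwo$, a non-blocking context $\nbvctx{\varset\disunion\varsettwo}$ and a \specialv\ name $\ovartwo$ such that $\lamtopia{\evvctxp{\varset}{w}}{\ovar}=\nbvctxp{\varset\disunion\varsettwo}{\lamtopia{w}{\ovartwo}}$ \emph{for every} $w$; instantiating at $w:=u$ and at $w:=u'$, and using that $\topidm$ (resp.\ $\topidls$) and $\streq$ are both closed under non-blocking contexts, the statement reduces to proving $\lamtopia{u}{\ovar}\topidm\streq\lamtopia{u'}{\ovar}$ (resp.\ with $\topidls$) for an \emph{arbitrary} \specialv\ name $\ovar$ and a redex at the root. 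So from now on I assume $\tm=u$, $\tmtwo=u'$.

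\medskip
\noindent\emph{Root $\rtodb$ step} ($u=\sctxvp{\varsetthree}{\l\var.\tmthree}\,\tmfour\rtodb\sctxvp{\varsetthree}{\tmthree[\var/\tmfour]}=u'$). Unfolding the clause for the application introduces a fresh \specialv\ name $\ovartwo$ and a fresh variable $\vartwo$; the ``moreover'' part of Lemma~\ref{l:wh-to-nb} applied to the substitution context $\sctxv{\varsetthree}$ rewrites $\lamtopia{\sctxvp{\varsetthree}{\l\var.\tmthree}}{\ovartwo}$ as $\nbvctxp{\varsetthree}{\lamtopia{\l\var.\tmthree}{\ovartwo}}$ with $\nbvctx{\varsetthree}$ not depending on $\ovartwo$ and not capturing the bound $\var$ (Barendregt's convention, $\nbvctx{\varsetthree}$ being the translation of a substitution context); and the clause for the abstraction exposes an input on $\ovartwo$. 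That input and the output $\out{\ovartwo}{\vartwo,\ovar}$ produced by the application clause form a $\rtopidm$-redex on the fresh $\ovartwo$, so — commuting the two parallel components if needed, which is harmless since $\streq$ is a strong bisimulation for $\topid$ (Lemma~\ref{l:harmony}) — I fire it. It then remains to reshape the contractum into $\lamtopia{u'}{\ovar}=\nbvctxp{\varsetthree}{\lamtopia{\tmthree[\var/\tmfour]}{\ovar}}$: rename the received \specialv\ name back ($(\lamtopia{\tmthree}{\cdot})\isub{\cdot}{\ovar}=\lamtopia{\tmthree}{\ovar}$, an easy induction), $\alpha$-rename $\vartwo$ to $\var$, garbage-collect $\nu\ovartwo$ by Lemma~\ref{l:pi-der-rules}.\ref{p:pi-der-rules-three} (the fresh $\ovartwo$ no longer occurs), and use Lemma~\ref{l:pi-der-rules}.\ref{p:pi-der-rules-one} then~\ref{p:pi-der-rules-four} to slide the server $!\inp{\var}{\ovarthree}.\lamtopia{\tmfour}{\ovarthree}$ and the restriction $\nu\var$ inside $\nbvctx{\varsetthree}$ — legitimate because $\var$ and $\fv{\tmfour}$ are disjoint from $\varsetthree$ and from $\fn{\nbvctx{\varsetthree}}$. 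Recognizing $\nu\var(\lamtopia{\tmthree}{\ovar}\paral!\inp{\var}{\ovarthree}.\lamtopia{\tmfour}{\ovarthree})=\lamtopia{\tmthree[\var/\tmfour]}{\ovar}$ closes the case.

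\medskip
\noindent\emph{Root $\rtowhls$ step} ($u=\evvctxptwo{\varsettwo}{\var}[\var/\tmthree]\rtowhls\evvctxptwo{\varsettwo}{\tmthree}[\var/\tmthree]=u'$, with $\var\notin\varsettwo$ and $\fv{\tmthree}\cap\varsettwo=\emptyset$). The explicit-substitution clause produces $\nu\var(\lamtopia{\evvctxptwo{\varsettwo}{\var}}{\ovar}\paral!\inp{\var}{\ovartwo}.\lamtopia{\tmthree}{\ovartwo})$, and Lemma~\ref{l:wh-to-nb} rewrites $\lamtopia{\evvctxptwo{\varsettwo}{\var}}{\ovar}$ as $\nbvctxp{\varsettwo\disunion\varsetthree}{\out{\var}{\ovarthree}}$ for some fresh $\varsetthree$ with $\varsetthree\cap\fv{\tmthree}=\emptyset$ and some \specialv\ name $\ovarthree$. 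The occurrence-output $\out{\var}{\ovarthree}$ — under $\nbvctx{\varsettwo\disunion\varsetthree}$, which does not capture $\var$ since $\var\notin\varsettwo$ and $\varsetthree$ is fresh — and the replicated input $!\inp{\var}{\ovartwo}.\lamtopia{\tmthree}{\ovartwo}$ form a $\rtopidls$-redex on $\var$, already in the right order, so no use of Lemma~\ref{l:harmony} here. Firing it and renaming the received \specialv\ name yields $\nbvctxp{\varsettwo\disunion\varsetthree}{\lamtopia{\tmthree}{\ovarthree}\paral!\inp{\var}{\ovartwo}.\lamtopia{\tmthree}{\ovartwo}}$ under $\nu\var$; pulling the server \emph{out} of $\nbvctx{\varsettwo\disunion\varsetthree}$ by Lemma~\ref{l:pi-der-rules}.\ref{p:pi-der-rules-one}, which needs exactly $\var\notin\varsettwo\disunion\varsetthree$ and $\fv{\tmthree}\cap(\varsettwo\disunion\varsetthree)=\emptyset$ — and here the implicit side condition $\fv{\tmthree}\cap\varsettwo=\emptyset$ of $\rtowhls$ is precisely what makes this work — gives $\nbvctxp{\varsettwo\disunion\varsetthree}{\lamtopia{\tmthree}{\ovarthree}}\paral!\inp{\var}{\ovartwo}.\lamtopia{\tmthree}{\ovartwo}$ under $\nu\var$, which by the uniformity of Lemma~\ref{l:wh-to-nb} is $\lamtopia{\evvctxptwo{\varsettwo}{\tmthree}}{\ovar}\paral!\inp{\var}{\ovartwo}.\lamtopia{\tmthree}{\ovartwo}$ under $\nu\var$, i.e.\ $\lamtopia{u'}{\ovar}$.

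\medskip
\noindent\emph{Expected difficulty.} There is little conceptual depth; the work is all in the name bookkeeping. The delicate points are: making sure the non-blocking contexts delivered by Lemma~\ref{l:wh-to-nb} do not capture the channel on which the communication fires; exploiting that Lemma~\ref{l:wh-to-nb} is \emph{uniform} in the term placed in the hole, so that one and the same $\nbvctx{\varset\disunion\varsettwo}$ and \specialv\ name serve for both $\tm$ and $\tmtwo$; and, in the $\rtowhls$ case, observing that the implicit freshness hypothesis $\fv{\tmthree}\cap\varsettwo=\emptyset$ of the rule is exactly the condition needed to commute the server across the context via Lemma~\ref{l:pi-der-rules}.\ref{p:pi-der-rules-one}. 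A few routine facts — that $\lamtopia{\cdot}{\ovar}$ respects $\alpha$-equivalence and commutes with renamings of free variables and of the continuation \specialv\ name (cf.\ Lemma~\ref{l:transl-free-names}) — are used without comment.
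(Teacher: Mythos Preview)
Your proof is correct and follows essentially the same route as the paper: the root cases for $\rtodb$ and $\rtowhls$ are handled by explicit calculation using Lemmas~\ref{l:wh-to-nb}, \ref{l:transl-free-names}, and~\ref{l:pi-der-rules}, and the lift through the ambient evaluation context goes via Lemma~\ref{l:wh-to-nb} together with closure of $\topid$ and $\streq$ under non-blocking contexts. The only cosmetic differences are that you factor the lifting step out up front (the paper presents it as an ``inductive step'' after the root cases) and that you explicitly invoke Lemma~\ref{l:harmony} to commute parallel components before firing $\rtopidm$, a point the paper leaves implicit.
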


\proof
1. Two cases:
\begin{itemize}
 \item \emph{Root rewriting step}: first without $\sctxvp{\varset}{\cdot}$: $(\l x.M) N \rtodb  M[x/N]$
 \begin{center}
$\begin{array}{llllllllll}
\lamtopia{(\l \var. \tm) \tmtwo}{\ovar} &=&  \nu \ovartwo\nu \vartwo  (\lamtopia{\l \var. \tm}{\ovartwo}\paral \out{\ovartwo}{\vartwo,\ovar}\paral!\inp{\vartwo}{\ovarthree}.\lamtopia{\tmtwo}{\ovarthree}) 
&=& \nu \ovartwo\nu \vartwo  (\inp{\ovartwo}{\var,\ovarfive}.\lamtopia{\tm}{\ovarfive}\paral \out{\ovartwo}{\vartwo,\ovar}\paral!\inp{\vartwo}{\ovarthree}.\lamtopia{\tmtwo}{\ovarthree}) \\
&\topidm& \nu \ovartwo \nu \vartwo  (\lamtopia{\tm}{\ovar}\isub{\var}{\vartwo}\paral !\inp{\vartwo}{\ovarthree}.\lamtopia{\tmtwo}{\ovarthree}) 
&\alpeq&\nu \ovartwo \nu \var  (\lamtopia{\tm}{\ovar}\paral !\inp{\var}{\ovarthree}.\lamtopia{\tmtwo}{\ovarthree}) \\
&=&\nu \ovartwo \lamtopia{\tm[\var/\tmtwo]}{\ovar}
&\streq&\lamtopia{\tm[\var/\tmtwo]}{\ovar}
\end{array}$
\end{center}

\noindent The $\alpeq$-step is justified by the fact that $\vartwo$ is introduced fresh in the first line. The $\streq$ step is justified by Lemma \ref{l:transl-free-names}, for which the only free \specialv\  name occurring in $\lamtopia{\tm}{\ovar}$ is $\ovar$, and by Lemma \ref{l:pi-der-rules}.\ref{p:pi-der-rules-three}, which allow us to remove the useless $\nu \ovartwo$.

Now, if $\sctxvp{\varset}{\l \var.\tm} \tmtwo \rtodb  \sctxvp{\varset}{\tm[\var/\tmtwo]}$ we get (some explanations follow):
\begin{center}
$\begin{array}{rcl}
\lamtopia{\sctxvp{\varset}{\l \var. \tm} \tmtwo}{\ovar} &=&  \nu \ovartwo \nu \vartwo  (\lamtopia{\sctxvp{\varset}{\l \var. \tm}}{\ovartwo}\paral \out{\ovartwo}{\vartwo,\ovar}\paral!\inp{\vartwo}{\ovarthree}.\lamtopia{\tmtwo}{\ovarthree}) \\
 &=_{Lem. \ref{l:wh-to-nb}}& \nu \ovartwo \nu \vartwo  (\nbvctxp{\varset}{\lamtopia{\l \var. \tm}{\ovartwo}}\paral \out{\ovartwo}{\vartwo,\ovar}\paral!\inp{\vartwo}{\ovarthree}.\lamtopia{\tmtwo}{\ovarthree}) \\

 &=& \nu \ovartwo \nu \vartwo  (\nbvctxp{\varset}{\inp{\ovartwo}{\var,\ovarfive}.\lamtopia{\tm}{\ovarfive}}\paral \out{\ovartwo}{\vartwo,\ovar}\paral!\inp{\vartwo}{\ovarthree}.\lamtopia{\tmtwo}{\ovarthree}) \\
 
 &\topidm& \nu \ovartwo \nu \vartwo ( \nbvctxp{\varset}{\lamtopia{\tm}{\ovar}\isub{\var}{\vartwo}\isub{\ovarfive}{\ovar}}\paral!\inp{\vartwo}{\ovarthree}.\lamtopia{\tmtwo}{\ovarthree} )\\
  &\alpeq& \nu \ovartwo \nu \var  (\nbvctxp{\varset}{\lamtopia{\tm}{\ovar}}\paral!\inp{\var}{\ovarthree}.\lamtopia{\tmtwo}{\ovarthree}) \\
  &\streq_{Lem. \ref{l:pi-der-rules}.\ref{p:pi-der-rules-one}\& Lem. \ref{l:pi-der-rules}.\ref{p:pi-der-rules-four}}& \nu \ovartwo \nbvctxp{\varset}{\nu \var(\lamtopia{\tm}{\ovar}\paral!\inp{\var}{\ovarthree}.\lamtopia{\tmtwo}{\ovarthree})} \\

 &=& \nu \ovartwo  \nbvctxp{\varset}{\lamtopia{\tm[\var/\tmtwo]}{\ovar}} \\
 &=_{Lem. \ref{l:wh-to-nb}}& \nu \ovartwo  \lamtopia{\sctxv{\varset}[\tm[\var/\tmtwo]]}{\ovar} \\
 &\streq_{Lem. \ref{l:transl-free-names} \& Lem. \ref{l:pi-der-rules}.\ref{p:pi-der-rules-three}}& \lamtopia{\sctxv{\varset}[\tm[\var/\tmtwo]]}{\ovar} \\

\end{array}$
\end{center}

\noindent The $\alpeq$-step and the last step are justified as before. In the first application of $\streq$ we can apply Lemma \ref{l:pi-der-rules}.\ref{p:pi-der-rules-one} because by hypothesis $\var\notin\varset$ and $\fv{\tmtwo}\cap\varset=\emptyset$, and Lemma \ref{l:pi-der-rules}.\ref{p:pi-der-rules-four} because $\var\notin\fn{\nbvctx{\varset}}$. The two applications of Lemma \ref{l:wh-to-nb} are with respect to different \specialv\ names $\ovar$ and $\ovartwo$, but this is sound: the \emph{moreover} part of Lemma \ref{l:wh-to-nb} guarantees that in the case of a substitution context $\sctxv{\varset}$ the corresponding context $\nbvctx{\varset}$ does not depend on the name.

\item \emph{Inductive step}:  $\evvctxp{\varset}{\tm}\todb \evvctxp{\varset}{\tmtwo}$ because $\tm\rtodb \tmtwo$. Let us recall that by definitions reductions in the $\pi$-calculus are closed by non-blocking contexts. Then:
\begin{center}
$\begin{array}{llllllllll}
\lamtopia{\evvctxp{\varset}{\tm}}{\ovar} 
&=_{Lem. \ref{l:wh-to-nb}} & 
\nbvctxp{\varset\disunion\varsettwo}{\lamtopia{\tm}{\ovartwo}}
&\topidm&
\nbvctxp{\varset\disunion\varsettwo}{\lamtopia{\tmtwo}{\ovartwo}}
&=_{Lem. \ref{l:wh-to-nb}} & 
\lamtopia{\evvctxp{\varset}{\tmtwo}}{\ovar} 
\end{array}$
\end{center}
\end{itemize}

\noindent 2. For $\tols$ the inductive case is as for $\todb$. The base case is $\evvctxp{\varset}{\var}[\var/\tmtwo]\towhlls \evvctxp{\varset}{\tmtwo}[\var/\tmtwo]$ with $\var\notin\varset$:
\begin{center}
$\begin{array}{lllllll}
\lamtopia{\evvctxp{\varset}{\var}[\var/\tmtwo]}{\ovar} 
&=&\nu \var (\lamtopia{\evvctxp{\varset}{\var}}{\ovar}\paral !\inp{\var}{\ovartwo}.\lamtopia{\tmtwo}{\ovartwo})
&=_{Lem. \ref{l:wh-to-nb}} & 
\nu \var (\nbvctxp{\varset\disunion\varsettwo}{\lamtopia{\var}{\ovarthree}}\paral !\inp{\var}{\ovartwo}.\lamtopia{\tmtwo}{\ovartwo})\\
&=&\nu \var (\nbvctxp{\varset\disunion\varsettwo}{\out{\var}{\ovarthree}}\paral !\inp{\var}{\ovartwo}.\lamtopia{\tmtwo}{\ovartwo})

&\topidls&\nu \var \nbvctxp{\varset\disunion\varsettwo}{\lamtopia{\tmtwo}{\ovarthree}\paral !\inp{\var}{\ovartwo}.\lamtopia{\tmtwo}{\ovartwo}}\\

&\streq_{Lem. \ref{l:pi-der-rules}.\ref{p:pi-der-rules-one}} &\nu \var (\nbvctxp{\varset\disunion\varsettwo}{\lamtopia{\tmtwo}{\ovarthree}}\paral !\inp{\var}{\ovartwo}.\lamtopia{\tmtwo}{\ovartwo})

&=_{Lem. \ref{l:wh-to-nb}}&\nu \var (\lamtopia{\evvctxp{\varset}{\tmtwo}}{\ovar}\paral !\inp{\var}{\ovartwo}.\lamtopia{\tmtwo}{\ovartwo})\\
&=& \lamtopia{\evvctxp{\varset}{\tmtwo}[\var/\tmtwo]}{\ovar} 
\end{array}$
\end{center}

\noindent where the $\streq$-step is justified by the fact that by hypothesis and by Lemma \ref{l:wh-to-nb} ($\var\notin\varsettwo$) we get that $(\fv{s}\disunion\set{\var,\ovartwo})\cap (\Delta\disunion\Gamma)=\emptyset$, and so we can apply Lemma \ref{l:pi-der-rules}.\ref{p:pi-der-rules-one}.
\qed

\paragraph{The converse relation.} To simulate process reductions on $\l$-terms we need a lemma, which is a converse to Lemma \ref{l:wh-to-nb}.

\begin{lemma}
\label{l:conv-ctx}
Let $\varset$ and $\varsettwo$ be a set of variable names and a set of \specialv\ names, respectively.
\begin{enumerate}
\item \label{p:conv-ctx-one} If $\lamtopia{\tm}{\ovar}=\nbvctxp{\varset\disunion\varsettwo}{\inp{\ovar}{\vartwo,\ovartwo}.\pr}$ with $\ovar\notin\varsettwo$ then $\Gamma=\emptyset$ and exist $\tmtwo$ and $\sctxv{\varset}$ s.t. $\pr=\lamtopia{\tmtwo}{\ovartwo}$ and $\tm=\sctxvp{\varset}{\l \vartwo.\tmtwo}$.
\item \label{p:conv-ctx-two} If $\lamtopia{\tm}{\ovar}=\nbvctxp{\varset\disunion\varsettwo}{\out{\var}{\ovarthree}}$ with $\var\notin\varset$ then exist $\varsetthree\subseteq\varset$ and $\evvctx{\varsetthree}$ s.t. $\tm=\evvctxp{\varsetthree}{\var}$ (and $\var\notin\varsetthree$).
\end{enumerate}
\end{lemma}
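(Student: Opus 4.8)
The plan is to establish both items by induction on the term $\tm$, stating the claim with the \specialv\ name and the non-blocking context universally quantified, so that the inductive hypothesis applies to the recursive calls of $\lamtopia{\cdot}{\cdot}$ (the application clause replaces the \specialv\ name by a fresh one). The whole argument rests on one \emph{rigidity} property of non-blocking contexts, itself proved by an immediate induction on $\nbvctx{\varset}$: the topmost constructor of $\nbvctxp{\varset}{\pr}$ is a $\nu$, or a $\paral$, or the topmost constructor of $\pr$ (the last possibility only when $\nbvctx{\varset}=\ctxhole$); and the hole of a non-blocking context never sits underneath an input or a replicated-input prefix. Writing out the four shapes $\lamtopia{\tm}{\ovar}$ can take --- a bare output if $\tm$ is a variable, two $\nu$-binders over a ternary parallel composition if $\tm$ is an application, an input prefix if $\tm$ is an abstraction, a $\nu$-binder over a binary parallel composition if $\tm$ is an explicit substitution --- the hypothesis that $\lamtopia{\tm}{\ovar}$ equals a non-blocking context around a \emph{given} prefix then pins down both the top symbol of the context and the location of its hole.

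For item~1 the given prefix is $\inp{\ovar}{\vartwo,\ovartwo}.\pr$, and $\ovar\notin\varsettwo$ tells us $\ovar$ is free in the whole process. If $\tm$ is a variable, then $\lamtopia{\tm}{\ovar}$ is a bare output, which contains no prefix, so the case is vacuous. If $\tm=\tmthree\tmfour$, the prefixes reachable without crossing a prefix are those inside $\lamtopia{\tmthree}{\ovartwo}$ --- in which $\ovar$ is not free by Lemma~\ref{l:transl-free-names}, so none of them is on $\ovar$ --- together with the output $\out{\ovartwo}{\ldots}$ and a \emph{replicated} input; none of these is $\inp{\ovar}{\vartwo,\ovartwo}.\pr$, so this case is vacuous too. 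If $\tm$ is an abstraction, then $\lamtopia{\tm}{\ovar}$ is itself an input prefix, so rigidity forces $\nbvctx{\varset\disunion\varsettwo}=\ctxhole$, hence $\varset=\varsettwo=\emptyset$; comparing the two prefixes up to $\alpha$ identifies $\pr$ with $\lamtopia{\tmtwo}{\ovartwo}$ for a suitable body $\tmtwo$ and exhibits $\tm$ as $\sctxvp{\emptyset}{\l\vartwo.\tmtwo}$ --- the base case. Finally if $\tm=\tmthree[\var/\tmfour]$, rigidity forces the context to begin with $\nu\var$ and then --- its replicated-input sibling being unreachable --- to keep its hole inside $\lamtopia{\tmthree}{\ovar}$; so $\lamtopia{\tmthree}{\ovar}$ equals a non-blocking context around the same prefix, and the inductive hypothesis gives $\varsettwo=\emptyset$, $\pr=\lamtopia{\tmtwo}{\ovartwo}$, and $\tmthree=\sctxvp{\varsetthree}{\l\vartwo.\tmtwo}$ with $\varsetthree=\varset\setminus\set{\var}$, whence $\tm=\sctxvp{\varsetthree}{\l\vartwo.\tmtwo}[\var/\tmfour]=(\sctxv{\varsetthree}[\var/\tmfour])\ctxholep{\l\vartwo.\tmtwo}$, again a substitution context filled with an abstraction, now capturing $\varsetthree\disunion\set{\var}=\varset$.

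Item~2 is dual, the given prefix being the output $\out{\var}{\ovarthree}$ with $\var$ a variable and $\var\notin\varset$. If $\tm$ is a variable $\vartwo$, rigidity forces $\nbvctx{\varset\disunion\varsettwo}=\ctxhole$, so $\varset=\varsettwo=\emptyset$, and matching $\out{\vartwo}{\ovar}$ with $\out{\var}{\ovarthree}$ gives $\vartwo=\var$ and $\ovarthree=\ovar$, so $\tm=\ctxhole\ctxholep{\var}$. If $\tm$ is an abstraction, then $\lamtopia{\tm}{\ovar}$ is an input prefix, hence not a non-blocking context around an output, so the case is vacuous --- fittingly, since evaluation contexts never enter abstractions. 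If $\tm=\tmthree\tmfour$, the only reachable output with a variable subject lies inside $\lamtopia{\tmthree}{\ovartwo}$ (the only other reachable output is $\out{\ovartwo}{\ldots}$, whose subject $\ovartwo$ is a \specialv\ name, hence distinct from the variable $\var$ since variables and \specialv\ names are disjoint); so $\lamtopia{\tmthree}{\ovartwo}$ equals a non-blocking context around $\out{\var}{\ovarthree}$ with $\var$ outside its captured variables, and the inductive hypothesis gives $\tmthree=\evvctxp{\varsetthree}{\var}$ with $\varsetthree\subseteq\varset$ and $\var\notin\varsetthree$, whence $\tm=\evvctxp{\varsetthree}{\var}\,\tmfour=(\evvctx{\varsetthree}\,\tmfour)\ctxholep{\var}$. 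If $\tm=\tmthree[\vartwo/\tmfour]$, rigidity puts the hole inside $\lamtopia{\tmthree}{\ovar}$, the inductive hypothesis gives $\tmthree=\evvctxp{\varsetthree}{\var}$ with $\varsetthree\subseteq\varset\setminus\set{\vartwo}$ and $\var\notin\varsetthree$, and then $\tm=\evvctxp{\varsetthree}{\var}[\vartwo/\tmfour]=(\evvctx{\varsetthree}[\vartwo/\tmfour])\ctxholep{\var}$ is an evaluation context filled with $\var$, capturing $\varsetthree\disunion\set{\vartwo}\subseteq\varset$ and still avoiding $\var$.

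All the real work is in the bookkeeping of the two inductive cases: peeling the outermost $\nu$ off the given context, arguing that the residual context must keep its hole inside the translation of the immediate subterm (its sibling being a prefix that no non-blocking context can enter), tracking how the captured set $\varset\disunion\varsettwo$ splits between the peeled layer and the residual, and discharging the $\alpha$-renaming side conditions ($\var$ avoiding the residual captured set, the peeled name not already captured, and so on) --- the same kind of reasoning as in Lemma~\ref{l:wh-to-nb}, run in reverse. I expect this to be the only delicate point; the rigidity property and the vacuous cases are routine. (For the record, ``$\Gamma=\emptyset$'' in the statement of item~1 is the assertion that the \specialv-name part $\varsettwo$ of the context is empty; it is produced by the abstraction base case and carried unchanged through the substitution case.)
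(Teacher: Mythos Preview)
Your proposal is correct and follows essentially the same approach as the paper: induction on $\tm$, with the variable and abstraction cases serving as (dual) base/vacuous cases, the application case vacuous for item~1 (via Lemma~\ref{l:transl-free-names}) and inductive for item~2, and the substitution case inductive for both. Your explicit articulation of the ``rigidity'' property of non-blocking contexts and of why the hole cannot fall in the replicated-input sibling is more detailed than the paper's write-up, but the underlying argument is the same.
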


\proof
Both points are by induction on $\tm$:
\begin{itemize}
\item \emph{Variable}: 
\begin{enumerate}
\item The hypothesis is false and there is nothing to prove.
\item By definition of $\lamtopia{\cdot}{\ovar}$, taking the empty context (and $\varset=\emptyset$).
\end{enumerate}
\item \emph{Abstraction}: 
\begin{enumerate}
\item By definition of $\lamtopia{\cdot}{\ovar}$, taking the empty context (and $\varset=\emptyset$).
\item The hypothesis is false and there is nothing to prove.
\end{enumerate}

\item \emph{Application}: if $\tm=\tmthree\tmfour$ then $\lamtopia{\tmthree\tmfour}{\ovar}=\nu \ovartwo\nu \varthree  (\lamtopia{\tmthree}{\ovartwo}\paral \out{\ovartwo}{\varthree,\ovar}\paral!\inp{\varthree}{\ovarthree}.\lamtopia{\tmfour}{\ovarthree})$ with $\varthree$ fresh. 
\begin{enumerate}
\item By Lemma \ref{l:transl-free-names} $\ovar\notin\fn{\lamtopia{\tmthree}{\ovartwo}}$, and so there is no context $\nbvctx{\varset\disunion\varsettwo}$ s. t. $\lamtopia{\tm}{\ovar}=\nbvctxp{\varset\disunion\varsettwo}{\inp{\ovar}{\vartwo,\ovartwo}.\pr}$, hence the hypothesis is false and there is nothing to prove.

\item It must be that $\lamtopia{\tmthree}{\ovar}=\nbvctxptwo{\varset'\disunion\varsettwo'}{\out{\var}{\ovarthree}}$ with $\varset=\varset'\disunion\set{\varthree}$ and $\varsettwo=\varsettwo'\disunion\set{\ovar}$. Then by \ih\ there exist $\varsetthree\subseteq\varset'$ and $\evvctxtwo{\varsetthree}$ s.t. $\tmthree=\evvctxptwo{\varsetthree}{\var}$. We conclude taking $\evvctx{\varsetthree}:=\evvctxtwo{\varsetthree}\tmfour$.
\end{enumerate}

\item \emph{Substitution}: if $\tm=\tmthree[\varthree/\tmfour]$ then $\lamtopia{\tmthree[\varthree/\tmfour]}{\ovar}= \nu \varthree (\lamtopia{\tmthree}{\ovar}\paral !\inp{\varthree}{\ovartwo}.\lamtopia{\tmfour}{\ovartwo})$.
\begin{enumerate}
\item  If $\lamtopia{\tm}{\ovar}=\nbvctxp{\varset\disunion\varsettwo}{\inp{\ovar}{\vartwo,\ovartwo}.\pr}$ then it must be that exists $\nbvctxptwo{\varset'\disunion\varsettwo}{\cdot}$  with $\varset=\varset'\disunion\set{\varthree}$ s.t. $\lamtopia{\tmthree}{\ovartwo}=\nbvctxptwo{\varset'\disunion\varsettwo}{\inp{\ovar}{\vartwo,\ovartwo}.\pr}$ and $\nbvctx{\varset\disunion\varsettwo}=\nu \varthree (\nbvctxtwo{\varset'\disunion\varsettwo}\paral !\inp{\varthree}{\ovartwo}.\lamtopia{\tmfour}{\ovartwo})$. By \ih\ we get $\Gamma=\emptyset$, $\tmthree=\sctxvptwo{\varset'}{\l \vartwo.\tmtwo}$, and $\pr=\lamtopia{\tmtwo}{\ovartwo}$. We conclude taking $\sctxv{\varset}:=\sctxvtwo{\varset'}[\varthree/\tmfour]$.

\item It must be that $\lamtopia{\tmthree}{\ovar}=\nbvctxptwo{\varset'\disunion\varsettwo'}{\out{\var}{\ovarthree}}$ with $\varset=\varset'\disunion\set{\varthree}$ and $\varsettwo=\varsettwo'\disunion\set{\ovar}$. Then by \ih\ there exist $\varsetthree'\subseteq\varset'$ and $\evvctxtwo{\varsetthree'}$ s.t. $\tmthree=\evvctxptwo{\varsetthree'}{\var}$. We conclude taking $\varsetthree:=\varsetthree'\disunion\set{\varthree}$ and $\evvctx{\varsetthree}:=\evvctxtwo{\varsetthree'}[\varthree/\tmfour]$.
\qed
\end{enumerate}
\end{itemize}

Now, we can prove that any process reduction from $\lamtopia{\tm}{\ovar}$ can be simulated by $\tm$.

\begin{theorem}[$\towhl$ strongly simulates $\topid$ via $\lamtopia{\cdot}{\ovar}$]
\label{l:reflection}
\hfill
\begin{enumerate}
\item \label{p:reflection-one} If $\lamtopia{\tm}{\ovar}\topidm\prtwo$ then exists $\tmtwo$ s.t. $\tm\towhldb\tmtwo$ and $\lamtopia{\tmtwo}{\ovar}\streq \prtwo$.
\item \label{p:reflection-two}If $\lamtopia{\tm}{\ovar}\topidls\prtwo$ then exists $\tmtwo$ s.t. $\tm\towhlls\tmtwo$ and $\lamtopia{\tmtwo}{\ovar}\streq \prtwo$.
\end{enumerate}
\end{theorem}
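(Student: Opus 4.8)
The plan is to run the proof of Theorem~\ref{tm:simulation} backwards, with Lemma~\ref{l:conv-ctx} playing the role that Lemma~\ref{l:wh-to-nb} played there. First I would unfold the definition of $\topid$: a step $\lamtopia{\tm}{\ovar}\topidm\prtwo$ exhibits $\lamtopia{\tm}{\ovar}$ as a non-blocking context filled by a root $\rtopidm$-redex (an exposed binary output facing, through non-blocking contexts, an exposed binary input on the same channel), and $\prtwo$ is the result of replacing that occurrence by its contractum; similarly for $\topidls$, with a unary output facing a replicated input. Then I would argue by induction on $\tm$, reading the statement as universally quantified over the \specialv\ name. If $\tm=\var$ then $\lamtopia{\tm}{\ovar}=\out{\var}{\ovar}$ is a single output, and if $\tm=\l\vartwo.\tmthree$ then $\lamtopia{\tm}{\ovar}=\inp{\ovar}{\vartwo,\ovartwo}.\lamtopia{\tmthree}{\ovartwo}$ is a single input prefix: neither offers a redex, so the hypothesis is vacuously satisfied. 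All the content lies in the two structural cases.

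For the \emph{application} case, $\tm=\tmthree\tmfour$, we have $\lamtopia{\tm}{\ovar}=\nu\ovartwo\nu\var(\lamtopia{\tmthree}{\ovartwo}\paral\out{\ovartwo}{\var,\ovar}\paral!\inp{\var}{\ovarthree}.\lamtopia{\tmfour}{\ovarthree})$ with $\var$ fresh. No redex can sit under the replication guarding $\lamtopia{\tmfour}{\ovarthree}$, and $\var\notin\fn{\lamtopia{\tmthree}{\ovartwo}}$ by Lemma~\ref{l:transl-free-names}, so the channel $\var$ carries no communication inside $\lamtopia{\tmthree}{\ovartwo}$. Hence the redex either \emph{(a)} pairs the binary output $\out{\ovartwo}{\var,\ovar}$ with a binary input on $\ovartwo$ exposed inside $\lamtopia{\tmthree}{\ovartwo}$ (possible only for $\topidm$), or \emph{(b)} lies entirely inside $\lamtopia{\tmthree}{\ovartwo}$. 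In case~(a), $\ovartwo$ is free in $\lamtopia{\tmthree}{\ovartwo}$ (Lemma~\ref{l:transl-free-names}), hence uncaptured, so Lemma~\ref{l:conv-ctx}.\ref{p:conv-ctx-one} applies and forces $\tmthree=\sctxvp{\varset}{\l\vartwo.\tmthree'}$ with the guarded body equal to $\lamtopia{\tmthree'}{\ovarfive}$; then $\tm=\sctxvp{\varset}{\l\vartwo.\tmthree'}\tmfour$ is a $\rtodb$-redex, and I would check $\tm\towhldb\sctxvp{\varset}{\tmthree'[\vartwo/\tmfour]}$ and $\lamtopia{\sctxvp{\varset}{\tmthree'[\vartwo/\tmfour]}}{\ovar}\streq\prtwo$ by replaying the chain of equalities and $\streq$-steps from the proof of Theorem~\ref{tm:simulation}.\ref{p:simulation-one} in reverse --- which is sound because in this configuration the reduction out of $\lamtopia{\tm}{\ovar}$ is uniquely determined, so $\prtwo$ is exactly the process produced there. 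In case~(b), I would apply the induction hypothesis to $\tmthree$ (with \specialv\ name $\ovartwo$), obtaining $\tmthree\towhldb\tmthree'$ (resp.\ $\towhlls$) with $\lamtopia{\tmthree'}{\ovartwo}\streq$ the reduct of $\lamtopia{\tmthree}{\ovartwo}$; then $\tmthree'\tmfour$ is the required witness, since left-of-application extends an evaluation context (so $\tm\towhldb\tmthree'\tmfour$) and $\streq$ is closed under the non-blocking context $\nu\ovartwo\nu\var(\ctxhole\paral\out{\ovartwo}{\var,\ovar}\paral!\inp{\var}{\ovarthree}.\lamtopia{\tmfour}{\ovarthree})$ (so $\lamtopia{\tmthree'\tmfour}{\ovar}\streq\prtwo$).

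For the \emph{substitution} case, $\tm=\tmthree[\var/\tmfour]$, we have $\lamtopia{\tm}{\ovar}=\nu\var(\lamtopia{\tmthree}{\ovar}\paral!\inp{\var}{\ovartwo}.\lamtopia{\tmfour}{\ovartwo})$. Again no redex sits under the replication, so a $\topidm$-redex must lie inside $\lamtopia{\tmthree}{\ovar}$, which is the analogue of case~(b) (induction hypothesis, now using that an explicit substitution extends an evaluation context); a $\topidls$-redex either pairs the replicated input $!\inp{\var}{\ovartwo}.\lamtopia{\tmfour}{\ovartwo}$ with a unary output $\out{\var}{\ovarthree}$ exposed in $\lamtopia{\tmthree}{\ovar}$, or is again the analogue of case~(b). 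In the former situation $\var\in\fn{\lamtopia{\tmthree}{\ovar}}$, hence $\var\in\fv{\tmthree}$ and $\var$ is uncaptured, so Lemma~\ref{l:conv-ctx}.\ref{p:conv-ctx-two} forces $\tmthree=\evvctxp{\varsetthree}{\var}$ with $\var\notin\varsetthree$; then $\tm=\evvctxp{\varsetthree}{\var}[\var/\tmfour]$ is a $\rtowhls$-redex, and I would check $\tm\towhlls\evvctxp{\varsetthree}{\tmfour}[\var/\tmfour]$ and $\lamtopia{\evvctxp{\varsetthree}{\tmfour}[\var/\tmfour]}{\ovar}\streq\prtwo$ by replaying the computation from the proof of Theorem~\ref{tm:simulation}.\ref{p:simulation-two} in reverse.

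The step I expect to be the main obstacle is the case analysis itself: showing that, inside $\lamtopia{\tm}{\ovar}$, a $\topidm$- or $\topidls$-redex can only occur in the places listed above and nowhere else. This rests on the sorting of channels into \specialv\ and variable names together with Lemma~\ref{l:transl-free-names}, on the precise inverse statements of Lemma~\ref{l:conv-ctx}, and --- in the application case --- on the freshness of the bound variable $\var$. A further point needing care, though routine, is that in the inductive subcases the given reduction out of $\lamtopia{\tm}{\ovar}$ genuinely restricts to a reduction of $\lamtopia{\tmthree}{\cdot}$ leaving the surrounding non-blocking context untouched; this uses the rigid shape of the distance rules, which are not closed under $\streq$. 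The $\streq$-verifications in the root cases are then nothing but the computations of Theorem~\ref{tm:simulation} read backwards, and $\alpha$-conversion is invoked, as usual, to meet the implicit freshness side-conditions.
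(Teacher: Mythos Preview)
Your proposal is correct and follows essentially the same approach as the paper: induction on $\tm$, with the value cases vacuous, the application and substitution cases split into a root subcase (handled via Lemma~\ref{l:conv-ctx} and a backward reading of the computations in Theorem~\ref{tm:simulation}) and an inductive subcase, and the completeness of this split justified by freshness and Lemma~\ref{l:transl-free-names}. Your explicit discussion of why the case analysis is exhaustive is slightly more detailed than the paper's, but the structure and the key lemmas invoked are identical.
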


\proof
Both points are by induction on $\tm$. Cases:
\begin{itemize}
 \item \emph{Values}: if $\tm=\var$ or $\tm=\l \var.\tmthree$ then $\lamtopia{\tm}{\ovar}$ cannot reduce.
 \item \emph{Application}: if $\tm=\tmthree\tmfour$ then $\lamtopia{\tm}{\ovar}= \nu \ovartwo\nu \var  (\lamtopia{\tmthree}{\ovartwo}\paral \out{\ovartwo}{\var,\ovar}\paral!\inp{\var}{\ovarthree}.\lamtopia{\tmfour}{\ovarthree})$ with $\var$ fresh. Then:
 \begin{enumerate}
  \item \emph{Multiplicative reduction}. Cases of $\lamtopia{\tm}{\ovar}\topidm\prtwo$:
  \begin{itemize}
  \item \emph{Root}: $\lamtopia{\tmthree}{\ovartwo}=\nbvctxp{\varset\disunion\varsettwo}{\inp{\ovartwo}{\vartwo,\ovarfour}.\pr}$ with $\ovartwo\notin (\varset\disunion\varsettwo)$ and the process reduction is a $\topidm$ interaction with $\out{\ovartwo}{\var,\ovar}$ on $\ovartwo$. By Lemma \ref{l:conv-ctx}.\ref{p:conv-ctx-one} we get that $\Gamma=\emptyset$, $\tmthree=\sctxvp{\varset}{\l \vartwo.\tmthree'}$, and $\pr=\lamtopia{\tmthree'}{\ovarfour}$. So $\tm=\sctxvp{\varset}{\l \vartwo.\tmthree'}\tmfour$ and thus it has a $\towhldb$-redex on $\vartwo$, which maps to the $\topidm$ communication on $\ovartwo$ exactly as in the proof of Theorem \ref{tm:simulation}.\ref{p:simulation-one}.
  
  \item \emph{Inductive}: because of $\lamtopia{\tmthree}{\ovartwo}\topidm\prthree$. Then by \ih\ exists $\tmthree'$ s.t. $\tmthree\todb \tmthree'$ and $\lamtopia{\tmthree'}{\ovartwo}\streq \prthree$. We conclude by taking $\tmtwo:=\tmthree'\tmfour$.
  
  \end{itemize}
  \item  \emph{Exponential reduction}. $\lamtopia{\tm}{\ovar}\topidls\prtwo$ can only happen if reduction takes place in $\lamtopia{\tmthree}{\ovartwo}$, because $x$ is fresh by hypothesis. In such a case we conclude using the \ih, as in the first sub-case of the previous point.
 \end{enumerate}
 
\item \emph{Substitution}: if $\tm=\tmthree[\var/\tmfour]$ then $\lamtopia{\tm}{\ovar}= \nu \var (\lamtopia{\tmthree}{\ovar}\paral !\inp{\var}{\ovartwo}.\lamtopia{\tmfour}{\ovartwo})$. We have:
\begin{enumerate}
 \item \emph{Multiplicative reduction}. $\lamtopia{\tm}{\ovar}\topidm\prtwo$ can only happen if reduction takes place in $\lamtopia{\tmthree}{\ovar}$, and we conclude using the \ih.
 \item \emph{Exponential reduction}. If $\lamtopia{\tm}{\ovar}\topidls\prtwo$ because reduction takes place in $\lamtopia{\tmthree}{\ovar}$ we use the \ih. Otherwise, $\lamtopia{\tmthree}{\ovar}=\nbvctxp{\varset\disunion\varsettwo}{\out{\var}{\ovarthree}}$ with $\var\notin\varset\disunion\varsettwo$ and the process reduction is a $\topidls$ interaction with $!\inp{\var}{\ovartwo}.\lamtopia{\tmfour}{\ovartwo}$ on $\var$.
By Lemma \ref{l:conv-ctx}.\ref{p:conv-ctx-two} there exist $\varsetthree$ and $\evvctx{\varsetthree}$ s.t. $\tmthree=\evvctxp{\varsetthree}{\var}$. So $\tm=\evvctxp{\varsetthree}{\var}[\var/\tmfour]$ has a $\towhlls$ redex on $\var$, which maps to the $\topidls$ communication on $\var$ exactly as in the proof of Theorem \ref{tm:simulation}.\ref{p:simulation-two}.
\qed
\end{enumerate}
\end{itemize}

According to the two theorems of this section, the relationship between the call-by-name strategy on the ordinary $\l$-calculus and the evaluation in the $\pi$-calculus is the same as the relationship between the call-by-name strategy and linear weak head reduction. In the strong case (\ie\ when (head) reduction can go under lambdas), it is known that the latter can be at most quadratically longer than the former \cite{DBLP:conf/rta/AccattoliL12}. The analysis in \cite{DBLP:conf/rta/AccattoliL12} does not depend on being weak or strong. It follows that the same upper bound holds between the call-by-name strategy and its evaluation in the $\pi$-calculus.

Last, it is easy to see that linear weak head reduction is \emph{deterministic}: every term has at most one $\tohl$ redex, since every redex writes as $\evvctxp{\varset}{\val}$ (where $\val$ is a value, \ie\ a variable or an abstraction) and such a decomposition is unique. This property accounts for what Milner calls \emph{determinacy} of $\lamtopia{\tm}{\ovar}$ in \cite{DBLP:journals/mscs/Milner92}.
\section{The call-by-value encoding}
\label{s:cbv}
We now show that the same exact relationship can be obtained with respect to call-by-value (CBV). The CBV calculus in use here is not Plotkin's calculus $\lv$. In \cite{DBLP:conf/flops/AccattoliP12} the author and Paolini introduced the \emph{value substitution calculus} $\lvsub$, which is a CBV calculus with explicit substitutions containing $\lv$ as a sub-calculus and behaving better than $\lv$ with respect to the semantical notion of \emph{solvability}. In \cite{AccLinearity,AccLSFA} we showed that $\lvsub$ has a sub-calculus, the \emph{value substitution kernel} $\lvker$, which has two key properties: 
\begin{enumerate}
\item \emph{Observational equivalence} \cite{AccLinearity}: there is a translation $\cdot^\circ:\lvsub\to\lvker$ s.t. $\tm$ and $\tm^\circ$ are equivalent with respect to observing any termination property.
\item \emph{Language for proof nets} \cite{AccLSFA}: $\lvker$ is an algebraic reformulation of the proof nets corresponding to the CBV translation of $\l$-calculus into \linlogic. Namely, there is a translation $\underline{\cdot}:\lvker\to PN$ which is a strong bisimulation.
\end{enumerate}
Here, we are going to show a further property: there are a CBV analogous $\tolwa$ of linear  weak head reduction $\towhl$ and a translation $\lamtopiv{\cdot}{\var}$ from $\lvker$ to the $\pi$-calculus which is a strong bisimulation with respect to $\tolwa$ and $\topid$. Let us  point out that in the untyped case there is a strong mismatch between Plotkin's calculus $\lv$ and the evaluation in \pns\ (see \cite{AccLinearity}), thus the results of this section do not hold with respect to $\lv$ (nor with any of its refinements with explicit substitutions where $\beta$-redexes are constrained to fire on values).

The \deff{value substitution kernel} $\lvker$ is given by the following grammar:
\begin{center}
$\begin{array}{rcl@{\hspace{3cm}}rcl}
\tm, \tmtwo, \tmthree,\tmfour & \grameq & \val \mid \val \tm\mid \tm[\var/\tmtwo] &
 \val &\grameq& \var \mid \l \var.\tm
\end{array}$ 
\end{center}

\noindent Please note that the left sub-term of an application can only be a value (see \cite{AccLinearity,AccLSFA} for more details). Substitution contexts $\sctxv{\varset}$ are defined as before. Instead, the language of \deff{evaluation contexts} changes: 
\begin{center}
$\begin{array}{rcl@{\hspace{3cm}}rcl}
\evvctx{\emptyset} &::=&  \ctxhole\mid \val\evvctx{\emptyset}\mid\tm[\var/\evvctx{\emptyset}]&
\evvctx{\varset\disunion\set{\var}} &::=& \evvctx{\varset}[\var/\tm] \mid \val\evvctx{\varset\disunion\set{\var}}\mid\tm[\vartwo/\evvctx{\varset\disunion\set{\var}}]
\end{array}$ 
\end{center}
Next, we define \deff{applicative contexts} as $\apctxvp{\varset}{\cdot}::=\evvctxp{\varset}{\ctxhole\tm}$.
As for CBN, we do not define the full calculus, but only the evaluation strategy. \deff{Linear weak applicative reduction}, noted $\tolwa$, is given by the rewriting rules $\tolwdb$ and $\tolwv$ defined as the closure by evaluation contexts of the following rules:
\begin{center}
$\begin{array}{lll@{\sep\sep\sep\sep\sep}lll@{\hspace{.6cm}}ll}
(\l \var.\tm) \tmtwo & \rtodb &  \tm[\var/\tmtwo]  &
 \apctxvp{\varset}{\var}[\var/\sctxvp{\varsetthree}{\val}] &\rtolwv & 
  \sctxvp{\varsetthree}{\apctxvp{\varset}{\val}[\var/\val]}& \mbox{$\var\notin\varset$}
\end{array}$ 
\end{center}

\noindent Note that the argument of a $\beta$-redex is not required to be a value, while the substitution rule can fire only in presence of a value (in a substitution context). As it was the case for the call-by-name calculus and for the $\pi$-calculus, one should also ask that $\fv{\val}\cap\varset=\emptyset$, $\fv{\apctxvp{\varset}{\var}}\cap\varsetthree=\emptyset$, and $\varset\cap\varsetthree=\emptyset$, but these side-conditions can always be satisfied by taking an $\alpha$-equivalent term, and so in the following they will be taken for granted. Note that $\var[\var/\vartwo] \not\rtolwv \vartwo$ but $(\var\varthree)[\var/\vartwo] \rtolwv \vartwo\varthree$, because substitution has to take place in an applicative context. This applicative restriction is a sort of converse to the head restriction used in the case of call-by-name evaluation. In terms of proof nets both these restrictions correspond to forbid reduction of cuts involving links in some $!$-boxes (with respect to the respective encodings of CBV 
and CBN), while the \emph{weak} requirement correspond to the analogous constraint with respect to the $\parr$-boxes mentioned in the introduction.
The applicative restriction is somehow a surprise, which is justified by the fact that it matches what happens in the $\pi$-calculus. It is a quite reasonable restriction: there is no point in substituting a value if it cannot be used in some application.

Linear weak applicative reduction enjoys a property which is the CBV analogous of the subterm porperty (deifned at the end of Section \ref{s:ls-calculus}). Let us call a \emph{v-subterm} a subterm which is a value.

\begin{lemma}[v-subterm property]
If $\tm \tolwa^k\tmtwo$ and $\val$ is a v-subterm of $\tmtwo$ then $\val$ is a v-subterm of $\tm$.
\end{lemma}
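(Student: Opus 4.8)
\proof
The plan is to reduce the statement to a one-step claim and then iterate. First I would prove: \emph{if $\tm\tolwa\tmtwo$ then every v-subterm of $\tmtwo$ occurs as a v-subterm of $\tm$}. Granting this, the lemma follows by induction on $k$: for $k=0$ it is immediate, and for $k>0$ we factor $\tm\tolwa\tmthree\tolwa^{k-1}\tmtwo$, use the induction hypothesis to see that $\val$ is a v-subterm of $\tmthree$, and then the one-step claim to conclude that $\val$ is a v-subterm of $\tm$.

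For the one-step claim I would proceed by induction on the evaluation context $\evvctx{\varset}$ realizing the step, i.e.\ on the derivation of $\tm=\evvctxp{\varset}{r}$, $\tmtwo=\evvctxp{\varset}{r'}$ with $r\rtodb r'$ or $r\rtolwv r'$. In the base case $\evvctx{\varset}=\ctxhole$ there are two subcases, one per root rule. For $\rtodb$, the redex is $(\l\var.\tmthree)\,\tmfour$ and the reduct $\tmthree[\var/\tmfour]$ merely rearranges existing material: $\tmthree$ and $\tmfour$ are already subterms of the redex, and the newly created topmost constructor is an explicit substitution, which is not a value; so the reduct has no v-subterm that is not already a v-subterm of the redex. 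For $\rtolwv$, the redex $\apctxvp{\varset}{\var}[\var/\sctxvp{\varsetthree}{\val}]$ rewrites to $\sctxvp{\varsetthree}{\apctxvp{\varset}{\val}[\var/\val]}$; here the step copies $\val$, pushes the (unchanged) entries of $\sctxv{\varsetthree}$ to the outside, and plugs $\val$ in place of the distinguished occurrence of $\var$ inside $\apctxv{\varset}$. But $\val$ is already a v-subterm of the redex (it sits inside $\sctxvp{\varsetthree}{\val}$), and every node built afresh --- the applications and explicit substitutions forming $\apctxvp{\varset}{\val}[\var/\val]$ and the reassembled $\sctxv{\varsetthree}$ prefixes --- fails to be a value. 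So again no new v-subterm appears. This is the heart of the matter, and it is exactly the content of the subterm property: the only subterm a step can duplicate is one that was already present.

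For the inductive step, the evaluation-context grammar of $\lvker$ leaves only three shapes for a nonempty $\evvctx{\varset}$, namely $\val\,\evvctxtwo{\varsettwo}$, $\tmthree[\vartwo/\evvctxtwo{\varsettwo}]$, and $\evvctxtwo{\varsettwo}[\var/\tmthree]$, with a strictly smaller sub-context $\evvctxtwo{\varsettwo}$ inside which the root step takes place. In each of these the topmost constructor of $\tmtwo$ is an application or an explicit substitution, hence not a value; so a v-subterm of $\tmtwo$ either sits entirely inside an unchanged component ($\val$, or $\tmthree$, or the explicit substitution together with its head) --- in which case it is literally a v-subterm of $\tm$ --- or it sits inside the reduced sub-context $\evvctxtwo{\varsettwo}$, in which case the induction hypothesis, applied to the one-step reduction between the two fillings of $\evvctxtwo{\varsettwo}$, yields that it is a v-subterm of the corresponding component of $\tm$. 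Either way we conclude.

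I expect the only subtle point to be the observation underpinning the inductive step: that no v-subterm of a term can straddle the reduced position, i.e.\ enclose it while being itself a value. This amounts to the remark that every nonempty evaluation context of $\lvker$ --- and likewise every nonempty substitution context --- has an application or an explicit substitution at its root, which is immediate from the grammars. Trivial as it is, this is what guarantees that the fresh application and substitution nodes produced by a $\tolwa$ step are never values, and it is what makes the induction go through.
\qed
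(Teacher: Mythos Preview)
Your argument is correct and follows essentially the same approach as the paper: induction on $k$, reducing to the one-step case, and there observing that $\rtodb$ merely rearranges existing material while $\rtolwv$ duplicates an already-present value and never substitutes into a value (since applicative and substitution contexts have a non-value constructor at the root). The paper's proof is considerably terser---it simply asserts these two observations without the explicit context induction---but the content is the same.
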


\begin{proof}
By induction on $k$. For $k=0$ it is trivial, for $k>0$ consider the term $\tmthree$ s.t. $\tmthree\tolwa\tmtwo$. The $\tolwdb$ rule does not create new values. The $\tolwv$ rule  duplicates a v-subterm of $\tmthree$, which by \ih\ is a v-subterm of $\tm$, and it does not substitute into v-subterms. So, any v-subterm of $\tmtwo$ is a v-subterm of $\tm$.
\end{proof}

Differently from linear weak head reduction, linear weak applicative reduction is a \emph{non-deterministic} stretegy: just consider $\tm=((\l \var.\var)(\vartwo \vartwo))[\vartwo/\varthree]$, which has two redexes. However, a simple induction shows that reduction is confluent: there is no need to use parallel reductions or other sophisticated techniques because no redex can duplicate/erase other redexes. In fact, it is easily seen that linear weak applicative reduction enjoys the diamond property. This fact corresponds to what Milner calls \emph{determinacy} of the CBV encoding.

\boldemph{The translation.} Similarly to the CBV translation of the $\l$-calculus to \linlogic, the CBV translation to the $\pi$-calculus uses an auxiliary function. The main translation function $\lamtopiv{\tm}{\var}$ is parametrized by a variable name $\var\notin\fv{\tm}$ (and not by a \specialv\ name) and the auxiliary function is noted $\lamtopiv{\cdot}{\ovar}$, \ie\ we use the same symbol but now the parameter is a \specialv\ name $\ovar$:
\begin{center}
$\begin{array}{rcl@{\sep\sep\sep\sep}rcl@{\sep}ll}
\lamtopiv{\val}{\var} &::=&  !\inp{\var}{\ovar}. \lamtopiv{\val}{\ovar}
&
\lamtopiv{\val\tmtwo}{\var} &::=& \nu \ovartwo\nu \vartwo  (\lamtopiv{\val}{\ovartwo}\paral \out{\ovartwo}{\vartwo,\var}\paral\lamtopiv{\tmtwo}{\vartwo}) & \mbox{$\vartwo$ is fresh}\\

\lamtopiv{\vartwo}{\ovar} &::=&  \out{\vartwo}{\ovar}
&
\lamtopiv{\tmtwo[\vartwo/\tmthree]}{\var} &::=&\nu \vartwo (\lamtopiv{\tmtwo}{\var}\paral \lamtopiv{\tmthree}{\vartwo})\\

\lamtopiv{\l \vartwo.\tmtwo}{\ovar} &::=&  \inp{\ovar}{\vartwo,\varthree}.\lamtopiv{\tmtwo}{\varthree}
\end{array}$ 
\end{center}

\noindent Note that the application case uses the auxiliary function on $\val$. Note also the difference with the call-by-name case: applications and explicit substitutions do not use replication, which is instead associated to values, with the important exception of applied values. The \emph{applicative} restriction on the strategy $\tolwa$ comes from this exception: the impossibility of interacting under replication in the $\pi$-calculus reflects on terms as the fact that one can substitute only on variables in applicative contexts, because the others are under a replication prefix. Last, this encoding is a minor variation over the CBV one in \cite{DBLP:conf/fossacs/ToninhoCP12}, which is not Milner's original CBV encoding.

\begin{lemma}
\label{l:cbv-transl-free-names}
Let $\tm\in\lvker$. Then $\fn{\lamtopiv{\tm}{\var}}=\fv{\tm}\disunion\set{\var}$ and $\fn{\lamtopiv{\tm}{\ovar}}=\fv{\tm}\disunion\set{\ovar}$.
\end{lemma}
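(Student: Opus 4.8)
The plan is to follow the proof of Lemma~\ref{l:transl-free-names} almost verbatim: a routine induction on the structure of $\tm$. The one new ingredient is that the call-by-value translation is defined by mutual recursion between the main function $\lamtopiv{\cdot}{\var}$, defined on all terms, and the auxiliary function $\lamtopiv{\cdot}{\ovar}$, defined on values only; so I would prove the two equalities simultaneously, the second one being stated and used only when $\tm$ is a value. Throughout I would silently use the freshness conventions built into the translation clauses --- the fresh $\vartwo$ of the application clause, the continuation name $\varthree$ introduced in the abstraction clause, and, in the substitution clause, the bound $\vartwo$ chosen distinct from the parameter $\var$ and not free in the argument --- together with the standing hypothesis $\var\notin\fv{\tm}$ on the main function; these are exactly what make the disjoint unions in the statement genuinely disjoint and guarantee that each $\nu$-restriction and each input binder removes a name that really occurs free in its body.

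\emph{Base cases (values).} If $\tm=\vartwo$, then $\lamtopiv{\vartwo}{\ovar}=\out{\vartwo}{\ovar}$ has free names $\set{\vartwo,\ovar}=\fv{\vartwo}\disunion\set{\ovar}$, while $\lamtopiv{\vartwo}{\var}=\,!\inp{\var}{\ovar}.\out{\vartwo}{\ovar}$ has free names $\set{\var,\vartwo}=\fv{\vartwo}\disunion\set{\var}$, the replicated input binding $\ovar$. If $\tm=\l\vartwo.\tmtwo$, then $\lamtopiv{\l\vartwo.\tmtwo}{\ovar}=\inp{\ovar}{\vartwo,\varthree}.\lamtopiv{\tmtwo}{\varthree}$; by the induction hypothesis $\fn{\lamtopiv{\tmtwo}{\varthree}}=\fv{\tmtwo}\disunion\set{\varthree}$, and the input prefix binds $\vartwo$ and $\varthree$, so the free names are $(\fv{\tmtwo}\setminus\set{\vartwo})\disunion\set{\ovar}=\fv{\l\vartwo.\tmtwo}\disunion\set{\ovar}$; then $\lamtopiv{\l\vartwo.\tmtwo}{\var}=\,!\inp{\var}{\ovar}.\lamtopiv{\l\vartwo.\tmtwo}{\ovar}$, and since the replicated input binds $\ovar$ we obtain $\fv{\l\vartwo.\tmtwo}\disunion\set{\var}$.

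\emph{Inductive cases (non-values).} Only the first equality is needed. If $\tm=\val\tmtwo$, then from $\lamtopiv{\val\tmtwo}{\var}=\nu\ovartwo\nu\vartwo(\lamtopiv{\val}{\ovartwo}\paral\out{\ovartwo}{\vartwo,\var}\paral\lamtopiv{\tmtwo}{\vartwo})$ I would apply the induction hypothesis to $\val$ (auxiliary function; here $\ovartwo\notin\fv{\val}$ holds automatically since \specialv\ names are disjoint from variables) and to $\tmtwo$ (main function), observe that the free names of the parallel composition are those of $\val\tmtwo$ together with $\ovartwo$, $\vartwo$, $\var$, and note that $\nu\ovartwo\nu\vartwo$ deletes exactly $\ovartwo$ and $\vartwo$, leaving $\fv{\val\tmtwo}\disunion\set{\var}$. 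If $\tm=\tmtwo[\vartwo/\tmthree]$, then from $\lamtopiv{\tmtwo[\vartwo/\tmthree]}{\var}=\nu\vartwo(\lamtopiv{\tmtwo}{\var}\paral\lamtopiv{\tmthree}{\vartwo})$ an entirely analogous computation, using the two induction hypotheses and the usual description of $\fv{\tmtwo[\vartwo/\tmthree]}$, yields the claim once $\nu\vartwo$ erases $\vartwo$.

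There is no genuine difficulty here --- the statement is of the same ``easy induction'' flavour as its call-by-name counterpart, Lemma~\ref{l:transl-free-names} --- so the only point deserving care is the bookkeeping: phrasing the induction as a single simultaneous statement ranging over the mutual recursion of the two translation functions, and checking that the implicit $\alpha$-conversion conventions of the translation (the fresh names of the application and substitution clauses, and $\var\notin\fv{\tm}$) are precisely what justify the disjointness of the unions and the fact that each $\nu$ and each input prefix binds a name that really occurs free in its scope.
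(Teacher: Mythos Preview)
Your proposal is correct and is exactly the approach the paper takes: the paper's proof is the one-liner ``By mutual induction on $\lamtopiv{\tm}{\var}$ and $\lamtopiv{\tm}{\ovar}$'', and you have simply spelled out that mutual induction in full. The bookkeeping remarks you make about freshness conventions and the standing hypothesis $\var\notin\fv{\tm}$ are precisely the points one needs to check, and you handle them correctly.
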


\begin{proof}
By mutual induction on $\lamtopiv{\tm}{\var}$ and $\lamtopiv{\tm}{\ovar}$.
\end{proof}

The following lemma is the call-by-value analogous of Lemma \ref{l:wh-to-nb}.

\begin{lemma}[Relating $\evvctx{}$ and $\nbvctx{}$ via $\lamtopiv{\cdot}{\var}$]
\label{l:cbv-wh-to-nb}
Let $\varset$ be a set of variable names, $\var$ a variable name and $\evvctx{\varset}$ an evaluation context. There exist a set of names $\varsettwo$ (possibly containing both variables and \specialv\ names), a non-blocking context $\nbvctx{\varset\disunion\varsettwo}$, and a variable name $\varthree$ s.t. $\lamtopiv{\evvctxp{\varset}{\tm}}{\var}= \nbvctxp{\varset\disunion\varsettwo}{\lamtopiv{\tm}{\varthree}}$ and $\varsettwo\cap\fv{\tm}=\emptyset$ for every term $\tm$. Moreover, if $\evvctx{\varset}$ is a substitution context $\sctxv{\varset}$ then $\var=\varthree$, $\varsettwo=\emptyset$, and $\nbvctx{\varset}$ does not depend on $\var$.
\end{lemma}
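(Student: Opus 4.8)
The plan is to proceed by structural induction on the evaluation context $\evvctx{\varset}$, just as for Lemma~\ref{l:wh-to-nb}, but now handling the three inductive productions of the call-by-value grammar of evaluation contexts rather than two. The base case $\evvctx{\emptyset}=\ctxhole$ is immediate: take $\varsettwo:=\emptyset$, $\nbvctx{\emptyset}:=\ctxhole$ and $\varthree:=\var$; since $\ctxhole$ is also a substitution context, the \emph{moreover} clause holds trivially.

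For the inductive step, in each case I would unfold the definition of $\lamtopiv{\cdot}{\var}$ on the outermost constructor, apply the induction hypothesis to the immediate sub-context, and then recognise the resulting process as a non-blocking context plugged with $\lamtopiv{\tm}{\varthree}$, collecting the newly bound names into $\varsettwo$. If $\evvctx{\varset}=\val\,\evvctxtwo{\varset}$, then $\lamtopiv{\evvctxp{\varset}{\tm}}{\var}=\nu\ovartwo\nu\vartwo\,(\lamtopiv{\val}{\ovartwo}\paral\out{\ovartwo}{\vartwo,\var}\paral\lamtopiv{\evvctxptwo{\varset}{\tm}}{\vartwo})$ with $\vartwo$ fresh, and the \ih\ on $\evvctxtwo{\varset}$ with parameter $\vartwo$ turns the last factor into $\nbvctxptwo{\varset\disunion\varsetthree}{\lamtopiv{\tm}{\varthree}}$ for some $\varsetthree$ disjoint from $\fv{\tm}$; one concludes with $\varsettwo:=\varsetthree\disunion\set{\ovartwo,\vartwo}$, noting that $\ovartwo$ is a \specialv\ name and $\vartwo$ is fresh, so that neither lies in $\fv{\tm}$. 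If $\evvctx{\varset\disunion\set{\vartwo}}=\evvctxtwo{\varset}[\vartwo/\tmthree]$, then $\lamtopiv{\evvctxp{\varset\disunion\set{\vartwo}}{\tm}}{\var}=\nu\vartwo\,(\lamtopiv{\evvctxptwo{\varset}{\tm}}{\var}\paral\lamtopiv{\tmthree}{\vartwo})$; the \ih\ applies to the left factor and $\varsettwo$ is left unchanged, because $\vartwo$ does bind into the hole and is therefore absorbed into the context's own set $\varset\disunion\set{\vartwo}$, exactly as in the corresponding subcase of Lemma~\ref{l:wh-to-nb}. If $\evvctx{\varset}=\tmthree[\vartwo/\evvctxtwo{\varset}]$, then $\lamtopiv{\evvctxp{\varset}{\tm}}{\var}=\nu\vartwo\,(\lamtopiv{\tmthree}{\var}\paral\lamtopiv{\evvctxptwo{\varset}{\tm}}{\vartwo})$; the \ih\ applies to the right factor and $\varsettwo$ grows by $\set{\vartwo}$. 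In every case the rearrangement into a non-blocking context is pure grammar matching: a leading $\nu$ contributes one captured name, and putting the translated sub-context in parallel with the remaining hole-free factors is a zero-capture non-blocking context. The \emph{moreover} clause then follows by observing that a substitution context $\sctxv{\varset}$ can only be produced by iterating the case $\evvctxtwo{\varset}[\vartwo/\tmthree]$, that the extra layer $\nu\vartwo\,(-\paral\lamtopiv{\tmthree}{\vartwo})$ does not mention the parameter $\var$, so the inductively obtained independence of the inner context from $\var$ is preserved, and that it also forces $\varsettwo=\emptyset$ and $\varthree=\var$.

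The delicate point --- essentially the only genuine departure from the call-by-name proof --- is the third production $\tmthree[\vartwo/\evvctxtwo{\varset}]$: there the substitution binder $\vartwo$ does \emph{not} bind into the hole (it binds only in $\tmthree$), hence it cannot go into the context's own set $\varset$ and is forced into $\varsettwo$, where the requirement $\varsettwo\cap\fv{\tm}=\emptyset$ threatens to fail. The resolution is precisely the implicit $\alpha$-renaming discipline already invoked for the side-conditions of $\rtolwv$: since $\vartwo$ is not among the names captured by $\evvctxtwo{\varset}$ in hole position, $\evvctx{\varset}$ may be chosen $\alpha$-equivalent to one whose $\vartwo$ avoids $\fv{\tm}$. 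Apart from this, the remaining work is the routine but fiddly freshness bookkeeping --- the fresh $\vartwo$ of the application case, the standing assumption that the translation parameter does not occur free in the context, and the pairwise disjointness of the variable sets carried by the contexts --- all of which falls into place once these invariants are fixed, just as in Lemma~\ref{l:wh-to-nb}.
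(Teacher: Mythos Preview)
Your proposal is correct and follows the same structural induction as the paper's proof, treating the same three inductive cases (right of an application, left of a substitution, right of a substitution) in the same way. The one place you are more explicit than the paper is the right-of-substitution case $\tmthree[\vartwo/\evvctxtwo{\varset}]$: the paper simply asserts ``since $\vartwo\notin\fv{\tm}$'' without comment, whereas you correctly flag that $\vartwo$ does not bind into the hole on the term side but the $\nu\vartwo$ does capture on the process side, and you resolve it by the same implicit $\alpha$-renaming convention the paper invokes elsewhere.
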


\proof
By induction on $\evvctx{\varset}$. The base case is given by the empty context $\evvctx{\emptyset}=\ctxhole$, and it is trivial, just take $\varsettwo:=\emptyset$, $\nbvctx{\emptyset}:=\ctxhole$, and $\varthree:=\var$. The inductive cases:
\begin{itemize}
\item \emph{Right of an application}, $\evvctx{\varset}=\val\evvctxtwo{\varset}$:
\begin{center}
$\begin{array}{llllllllll}
\lamtopiv{\evvctxp{\varset}{\tm}}{\var} &=&\lamtopiv{\val \evvctxptwo{\varset}{\tm}}{\var} 
&=&\nu \ovartwo\nu \vartwo  (\lamtopiv{\val}{\ovartwo}\paral \out{\ovartwo}{\vartwo,\var}\paral\lamtopiv{\evvctxptwo{\varset}{\tm}}{\vartwo})\\

&&&=_{\ih}&\nu \ovartwo\nu \vartwo  (\lamtopiv{\val}{\ovartwo}\paral \out{\ovartwo}{\vartwo,\var}\paral\nbvctxptwo{\varset\disunion\varsetthree}{\lamtopiv{\tm}{\varthree}})&=& \nbvctxp{\varset\disunion\varsetthree\disunion\set{\vartwo,\ovartwo}}{\lamtopiv{\tm}{\varthree}}
\end{array}$
\end{center}
The \ih\ also gives $\varsetthree\cap\fv{\tm}=\emptyset$. Since $\ovartwo,\vartwo\notin\fv{\tm}$ it follows that $\varsettwo:=\varsetthree\disunion\set{\vartwo,\ovartwo}$ satisfies $\varsettwo\cap\fv{\tm}=\emptyset$.

\item \emph{Right of a substitution}, $\evvctx{\varset}=\tmtwo[\vartwo/\evvctxtwo{\varset}]$:
\begin{center}
$\begin{array}{llllllllll}
\lamtopiv{\evvctxp{\varset}{\tm}}{\var} &=&\lamtopiv{\tmtwo [\vartwo/\evvctxptwo{\varset}{\tm}]}{\var} 
&=&\nu \vartwo ( \lamtopiv{\tmtwo}{\var}\paral\lamtopiv{\evvctxptwo{\varset}{\tm}}{\vartwo})\\
&&&=_{\ih}&\nu \vartwo (\lamtopiv{\tmtwo}{\var}\paral\nbvctxptwo{\varset\disunion\varsetthree}{\lamtopiv{\tm}{\varthree}})
&=& \nbvctxp{\varset\disunion\varsetthree\disunion\set{\vartwo}}{\lamtopiv{\tm}{\varthree}}
\end{array}$
\end{center}

\noindent The \ih\ also gives $\varsetthree\cap\fv{\tm}=\emptyset$. Since $\vartwo\notin\fv{\tm}$ it follows that $\varsettwo:=\varsetthree\disunion\set{\vartwo}$ satisfies $\varsettwo\cap\fv{\tm}=\emptyset$.

\item \emph{Left of a substitution}, $\evvctx{\varset\disunion\set{\varthree}}=\evvctxtwo{\varset} [\vartwo/\tmthree]$. Then:
\begin{center}
$\begin{array}{llllllllll}
\lamtopiv{\evvctxp{\varset\disunion\set{\vartwo}}{\tm}}{\var} &=&\lamtopiv{\evvctxptwo{\varset}{\tm} [\vartwo/\tmthree]}{\var} 
&=&\nu \vartwo (\lamtopiv{\evvctxptwo{\varset}{\tm}}{\var}\paral \lamtopiv{\tmthree}{\vartwo})\\
&&&=_{\ih}&\nu \vartwo (\nbvctxptwo{\varset\disunion\varsettwo}{\lamtopiv{\tm}{\varthree}}\paral\lamtopiv{\tmthree}{\vartwo})
&=& \nbvctxp{\varset\disunion\set{\vartwo}\disunion\varsettwo}{\lamtopiv{\tm}{\varthree}}
\end{array}$
\end{center}

\noindent The \ih\ also gives $\varsettwo\cap\fv{\tm}=\emptyset$. Now, suppose that $\evvctx{\varset\disunion\set{\vartwo}}$ (and thus $\evvctxtwo{\varset}$) is a substitution context $\sctxv{\varset}$. Then by \ih\ we get $\nbvctxtwo{\varset}$ not depending on $\var$ s.t.:
\begin{center}
$\begin{array}{llllllllll}
\lamtopiv{\evvctxp{\varset\disunion\set{\vartwo}}{\tm}}{\var} &=&\lamtopiv{\evvctxptwo{\varset}{\tm} [\vartwo/\tmthree]}{\var} 
&=&\nu \vartwo (\lamtopiv{\evvctxptwo{\varset}{\tm}}{\var}\paral \lamtopiv{\tmthree}{\vartwo})\\
&&&=_{\ih}&\nu \vartwo (\nbvctxptwo{\varset}{\lamtopiv{\tm}{\var}}\paral\lamtopiv{\tmthree}{\vartwo})
&=& \nbvctxp{\varset\disunion\set{\vartwo}}{\lamtopiv{\tm}{\var}}
\end{array}$
\end{center}

\noindent where clearly $\nbvctx{\varset\disunion\set{\vartwo}}$ does not depend on $\var$.
\qed
\end{itemize}

\begin{theorem}[$\topi$ strongly simulates $\tolwa$]
\label{tm:cbv-simulation}
\hfill
\begin{enumerate}
\item \label{p:cbv-simulation-one}$\tm\tolwdb \tmtwo$ implies $\lamtopiv{\tm}{\var} \topidm \streq\lamtopiv{\tmtwo}{\var} $.
\item \label{p:cbv-simulation-two} $\tm\tolwv \tmtwo$ implies $\lamtopiv{\tm}{\var} \topidls\streq\lamtopiv{\tmtwo}{\var}$.
\end{enumerate}
\end{theorem}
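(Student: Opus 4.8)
The plan is to transcribe the proof of Theorem~\ref{tm:simulation}. For each of the two claims I split the analysis into a \emph{root} case --- the rewriting rule fired at the position of the contexts it already mentions --- and an \emph{inductive} case --- the same rule fired under an extra outer evaluation context $\evvctx{\varset}$. The inductive case is uniform for both rules and goes exactly as the inductive step of Theorem~\ref{tm:simulation}: by Lemma~\ref{l:cbv-wh-to-nb}, $\lamtopiv{\evvctxp{\varset}{\tm}}{\var}=\nbvctxp{\varset\disunion\varsettwo}{\lamtopiv{\tm}{\varthree}}$ for a non-blocking context $\nbvctx{\varset\disunion\varsettwo}$ and channel $\varthree$ that depend only on $\evvctx{\varset}$ and $\var$, so the very same data decomposes $\lamtopiv{\evvctxp{\varset}{\tmtwo}}{\var}$; since $\topidm$, $\topidls$ and $\streq$ are closed by non-blocking contexts, a root-case step $\lamtopiv{\tm}{\varthree}\topidm\streq\lamtopiv{\tmtwo}{\varthree}$ (resp.\ $\topidls$) lifts to a step from $\lamtopiv{\evvctxp{\varset}{\tm}}{\var}$. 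Hence only the two root cases are at stake.

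For claim~\ref{p:cbv-simulation-one}, the root rule $(\l\var.\tm)\tmtwo\rtodb\tm[\var/\tmtwo]$ carries no substitution context, so this is even easier than its call-by-name counterpart. Unfolding $\lamtopiv{(\l\var.\tm)\tmtwo}{\vartwop}$ via the applied-value clause and the abstraction clause exposes a multiplicative redex on the private channel $\ovar$ created by the application clause; firing it with $\rtopidm$ and renaming the abstraction's continuation channel (which by Lemma~\ref{l:cbv-transl-free-names} occurs only as a parameter) gives $\nu\ovar\nu\varthree(\lamtopiv{\tm}{\vartwop}\isub{\var}{\varthree}\paral\lamtopiv{\tmtwo}{\varthree})$, with $\varthree$ the fresh name of the application clause. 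An $\alpha$-renaming of $\varthree$ back to $\var$ --- legitimate since $\varthree$ is fresh and $\var\notin\fv{\tmtwo}$ by the usual convention on the redex --- turns this into $\nu\ovar\,\lamtopiv{\tm[\var/\tmtwo]}{\vartwop}$, and since $\ovar$ is a \specialv\ name it is not free in $\lamtopiv{\tm[\var/\tmtwo]}{\vartwop}$ (Lemma~\ref{l:cbv-transl-free-names}), so Lemma~\ref{l:pi-der-rules}.\ref{p:pi-der-rules-three} discards $\nu\ovar$. This yields $\lamtopiv{(\l\var.\tm)\tmtwo}{\vartwop}\topidm\streq\lamtopiv{\tm[\var/\tmtwo]}{\vartwop}$.

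For claim~\ref{p:cbv-simulation-two}, the root rule is $\apctxvp{\varset}{\var}[\var/\sctxvp{\varsetthree}{\val}]\rtolwv\sctxvp{\varsetthree}{\apctxvp{\varset}{\val}[\var/\val]}$ with $\apctxvp{\varset}{\cdot}=\evvctxp{\varset}{\ctxhole\tmthree}$ for some term $\tmthree$. I would first expand the left-hand side: by the substitution clause it is $\nu\var(\lamtopiv{\evvctxp{\varset}{\var\tmthree}}{\vartwop}\paral\lamtopiv{\sctxvp{\varsetthree}{\val}}{\var})$; Lemma~\ref{l:cbv-wh-to-nb} on $\evvctx{\varset}$ rewrites the first factor as $\nbvctxp{\varset\disunion\varsettwo}{\lamtopiv{\var\tmthree}{\varthree}}$, and $\lamtopiv{\var\tmthree}{\varthree}$ unfolds --- by the applied-value and variable clauses --- to a process containing the output $\out{\var}{\ovar}$ (with $\ovar$ the private name of the application); the \emph{moreover} part of Lemma~\ref{l:cbv-wh-to-nb} on $\sctxv{\varsetthree}$, together with the value clause $\lamtopiv{\val}{\var}={!}\inp{\var}{\ovar}.\lamtopiv{\val}{\ovar}$, rewrites the second factor as $\nbvctxptwo{\varsetthree}{{!}\inp{\var}{\ovar}.\lamtopiv{\val}{\ovar}}$ with $\nbvctxtwo{\varsetthree}$ not depending on $\var$. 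Firing the exponential communication on $\var$ between $\out{\var}{\ovar}$ and this replicated input with $\rtopidls$ brings the output-side non-blocking context inside $\nbvctxtwo{\varsetthree}$, replaces $\out{\var}{\ovar}$ by $\lamtopiv{\val}{\ovar}$ (the very same process that occurs in $\lamtopiv{\val\tmthree}{\varthree}$), and leaves the persistent ${!}\inp{\var}{\ovar}.\lamtopiv{\val}{\ovar}$ beside it. It then remains to recognise the reduct, modulo $\streq$, as $\lamtopiv{\sctxvp{\varsetthree}{\apctxvp{\varset}{\val}[\var/\val]}}{\vartwop}$: expanding the target with the same two instances of Lemma~\ref{l:cbv-wh-to-nb} (same $\nbvctx{\varset\disunion\varsettwo}$, same $\varthree$, same $\nbvctxtwo{\varsetthree}$, the latter independent of the parameter), the only discrepancy is the position of the persistent ${!}\inp{\var}{\ovar}.\lamtopiv{\val}{\ovar}$, which in the reduct sits deep inside $\nbvctx{\varset\disunion\varsettwo}$ and the private binders of $\lamtopiv{\var\tmthree}{\varthree}$, whereas in the target it sits in parallel with $\nbvctxp{\varset\disunion\varsettwo}{\lamtopiv{\val\tmthree}{\varthree}}$ under $\nu\var$. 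This is closed by extruding that replicated input outward (Lemma~\ref{l:pi-der-rules}.\ref{p:pi-der-rules-one} and the scope-extrusion axiom of $\streq$, available because its free names $\fv{\val}\cup\set{\var}$ are disjoint from $\varset\disunion\varsettwo$ and from the private names, by $\fv{\val}\cap\varset=\emptyset$, $\var\notin\varset$ and freshness) and by commuting the outer $\nu\var$ through $\nbvctxtwo{\varsetthree}$ (Lemma~\ref{l:pi-der-rules}.\ref{p:pi-der-rules-four}, using $\var\notin\fn{\nbvctxtwo{\varsetthree}}$, which holds since $\var\notin\fv{\sctxvp{\varsetthree}{\val}}$ by $\alpha$-convention); modulo associativity and commutativity of $\paral$ this is the target.

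The main obstacle is precisely the $\streq$-bookkeeping of this last root case together with the attendant $\alpha$-conventions: one must check that the two invocations of Lemma~\ref{l:cbv-wh-to-nb} (on the redex and on the contractum) really return the same non-blocking contexts and channels, that the private names created by the translation can be chosen coherently on both sides, and that all the capture-avoidance provisos of $\rtolwv$ ($\varset\cap\varsetthree=\emptyset$, $\fv{\val}\cap\varset=\emptyset$, $\fv{\apctxvp{\varset}{\var}}\cap\varsetthree=\emptyset$, and hence $\var\notin\fv{\sctxvp{\varsetthree}{\val}}$ after renaming the bound $\var$) are simultaneously in force --- all arranged, as elsewhere in the paper, by passing to an $\alpha$-equivalent term before applying the rule. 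Everything else --- the inductive cases, the plain unfoldings of the translation, and the multiplicative root case --- transcribes the corresponding parts of Theorem~\ref{tm:simulation}.
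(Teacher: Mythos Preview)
Your proposal is correct and follows essentially the same route as the paper: the paper too dispatches the inductive cases by invoking Lemma~\ref{l:cbv-wh-to-nb} exactly as in Theorem~\ref{tm:simulation}, handles the multiplicative root case by the same unfold--fire--$\alpha$-rename--garbage-collect sequence you describe, and for the exponential root case performs the same decomposition via Lemma~\ref{l:cbv-wh-to-nb} on $\evvctx{\varset}$ and on $\sctxv{\varsetthree}$ (using the \emph{moreover} part), fires $\rtopidls$, then extrudes the persistent replicated input via Lemma~\ref{l:pi-der-rules}.\ref{p:pi-der-rules-one} and commutes $\nu\var$ through $\nbvctxtwo{\varsetthree}$ via Lemma~\ref{l:pi-der-rules}.\ref{p:pi-der-rules-four}, with the same freshness and capture-avoidance justifications you spell out. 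Your discussion of the $\streq$-bookkeeping and of the need for the two invocations of Lemma~\ref{l:cbv-wh-to-nb} to return the same data is exactly the point the paper's chain of equalities relies on.
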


\proof
We show the base cases, the inductive ones are as in the call-by-name case, using Lemma \ref{l:cbv-wh-to-nb}.
\begin{enumerate}
\item If $(\l \vartwo. \tm) \tmtwo\tolwdb \tm[\vartwo/\tmtwo]$ then:
\begin{center}
$\begin{array}{lllllllll}
\lamtopiv{(\l \vartwo. \tm) \tmtwo}{\var} &=&  \nu \ovartwo\nu \varthree  (\lamtopiv{\l \vartwo. \tm}{\ovartwo}\paral \out{\ovartwo}{\varthree,\var}\paral\lamtopiv{\tmtwo}{\varthree}) 

&=& \nu \ovartwo\nu \vartwo  (\inp{\ovartwo}{\vartwo,\varfour}.\lamtopiv{\tm}{\varfour}\paral \out{\ovartwo}{\varthree,\var}\paral\lamtopiv{\tmtwo}{\varthree}) \\

&\topidm& \nu \ovartwo \nu \vartwo  (\lamtopiv{\tm}{\varfour}\isub{\varfour}{\var}\isub{\vartwo}{\varthree}\paral \lamtopiv{\tmtwo}{\varthree}) 

&\alpeq&\nu \ovartwo \nu \vartwo  (\lamtopiv{\tm}{\var}\paral \lamtopiv{\tmtwo}{\vartwo}) \\
&=&\nu \ovartwo \lamtopiv{\tm[\var/\tmtwo]}{\var}
&\streq_{Lem. \ref{l:cbv-transl-free-names}}&\lamtopiv{\tm[\var/\tmtwo]}{\var}

\end{array}$
\end{center}

\item If  $\apctxvp{\varset}{\vartwo}[\vartwo/\sctxvp{\varsetthree}{\val}] \rtolwv  
  \sctxvp{\varsetthree}{\apctxvp{\varset}{\val}[\vartwo/\val]}$ and $\apctxvp{\varset}{\cdot}= \evvctxp{\varset}{\ctxhole\tmtwo}$ then:  
  \begin{center}
$\begin{array}{lll}
\lamtopiv{\apctxvp{\varset}{\vartwo}[\vartwo/\sctxvp{\varsetthree}{\val}]}{\var}

&=& \nu\vartwo(\lamtopiv{\evvctxp{\varset}{\vartwo\tmtwo}}{\var}\paral \lamtopiv{\sctxvp{\varsetthree}{\val}}{\vartwo})\\

&=_{Lem. \ref{l:cbv-wh-to-nb}}& \nu\vartwo(\nbvctxp{\varset\disunion\varsettwo}{\lamtopiv{\vartwo\tmtwo}{\varthree}}\paral \nbvctxptwo{\varsetthree}{\lamtopiv{\val}{\vartwo}})\\

&=& \nu\vartwo(\nbvctxp{\varset\disunion\varsettwo}{\lamtopiv{\vartwo\tmtwo}{\varthree}}\paral \nbvctxptwo{\varsetthree}{!\inp{\vartwo}{\ovar}.\lamtopiv{\val}{\ovar}})\\
&=&
\nu\vartwo(\nbvctxp{\varset\disunion\varsettwo}{\nu \ovartwo\nu \varfour  (\lamtopiv{\vartwo}{\ovartwo}\paral \out{\ovartwo}{\varfour,\varthree}\paral\lamtopiv{\tmtwo}{\varfour})
}\paral \nbvctxptwo{\varsetthree}{!\inp{\vartwo}{\ovar}.\lamtopiv{\val}{\ovar}})\\
&=&
\nu\vartwo(\nbvctxp{\varset\disunion\varsettwo}{\nu \ovartwo\nu \varfour  (\out{\vartwo}{\ovartwo}\paral \out{\ovartwo}{\varfour,\varthree}\paral\lamtopiv{\tmtwo}{\varfour})
}\paral \nbvctxptwo{\varsetthree}{!\inp{\vartwo}{\ovar}.\lamtopiv{\val}{\ovar}})\\
&\topidls&
\nu\vartwo\nbvctxptwo{\varsetthree}{\nbvctxp{\varset\disunion\varsettwo}{\nu \ovartwo\nu \varfour  (\lamtopiv{\val}{\ovartwo}\paral !\inp{\vartwo}{\ovar}.\lamtopiv{\val}{\ovar}\paral \out{\ovartwo}{\varfour,\varthree}\paral\lamtopiv{\tmtwo}{\varfour})}
}\\
&\streq_{Lem. \ref{l:pi-der-rules}.\ref{p:pi-der-rules-one}}&
\nu\vartwo\nbvctxptwo{\varsetthree}{\nbvctxp{\varset\disunion\varsettwo}{\nu \ovartwo\nu \varfour  (\lamtopiv{\val}{\ovartwo}\paral \out{\ovartwo}{\varfour,\varthree}\paral\lamtopiv{\tmtwo}{\varfour})}\paral !\inp{\vartwo}{\ovar}.\lamtopiv{\val}{\ovar}
}\\
&=&
\nu\vartwo\nbvctxptwo{\varsetthree}{\nbvctxp{\varset\disunion\varsettwo}{\lamtopiv{\val\tmtwo}{\varthree}}\paral !\inp{\vartwo}{\ovar}.\lamtopiv{\val}{\ovar}}\\
&=&
\nu\vartwo\nbvctxptwo{\varsetthree}{\lamtopiv{\evvctxp{\varset}{\val\tmtwo}}{\var}\paral !\inp{\vartwo}{\ovar}.\lamtopiv{\val}{\ovar}}\\
&\streq_{Lem. \ref{l:pi-der-rules}.\ref{p:pi-der-rules-four}}&
\nbvctxptwo{\varsetthree}{\nu\vartwo(\lamtopiv{\evvctxp{\varset}{\val\tmtwo}}{\var}\paral !\inp{\vartwo}{\ovar}.\lamtopiv{\val}{\ovar})}\\
&=&
\nbvctxptwo{\varsetthree}{\lamtopiv{\evvctxp{\varset}{\val\tmtwo}[\vartwo/\val]}{\var}}\\
&=_{Lem. \ref{l:cbv-wh-to-nb}}&
\lamtopiv{\sctxvp{\varsetthree}{\evvctxp{\varset}{\val\tmtwo}[\vartwo/\val]}}{\var}\\
&=&
\lamtopiv{\sctxvp{\varsetthree}{\apctxvp{\varset}{\val}[\vartwo/\val]}}{\var}
	\end{array}$
\end{center}

\noindent The $\streq$ step after the reduction is justified by the fact that $\ovartwo$, $\varfour$, and all the variables in $\varsettwo$ are introduced fresh and so do not belong to $\fv{\val}$. Moreover, $\varset\cap\fv{\val}=\emptyset$ by hypothesis,and so we can apply Lemma \ref{l:pi-der-rules}.\ref{p:pi-der-rules-one}.
\qed
\end{enumerate}

\paragraph{The converse relation.} As for call-by-name, we show that linear weak applicative reduction reflects exactly evaluation in the $\pi$-calculus.

\begin{lemma}
 \label{l:cbv-conv-ctx}
Let $\varset$ and $\varsettwo$ be a set of variable names and a set of \specialv\ names, respectively. Then:
\begin{enumerate}
\item \label{p:cbv-conv-ctx-one} If $\lamtopiv{\tm}{\var}=\nbvctxp{\varset\disunion\varsettwo}{!\inp{\var}{\ovar}.\pr}$ with $\var \notin\varset$ then $\varsettwo=\emptyset$ and exist $\val$ and $\sctxv{\varset}$ s.t. $\pr=\lamtopiv{\val}{\ovar}$ and $\tm=\sctxvp{\varset}{\val}$.

\item \label{p:cbv-conv-ctx-two} If $\lamtopiv{\tm}{\var}=\nbvctxp{\varset\disunion\varsettwo}{\out{\vartwo}{\ovar}}$ with $\vartwo\notin\varset$ then exist $\varsetthree\subseteq\varset$ and $\apctxv{\varsetthree}$ s.t. $\tm=\apctxvp{\varsetthree}{\vartwo}$.

\end{enumerate}
\end{lemma}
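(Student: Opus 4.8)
First I would prove both points simultaneously by induction on $\tm$, following exactly the structure of the proof of Lemma~\ref{l:conv-ctx} for the call-by-name case, with a case split on the shape of $\tm$ in the grammar of $\lvker$: a value $\val$, an application $\val\,\tmtwo$, or an explicit substitution $\tmthree[\varthree/\tmfour]$. In each case I unfold $\lamtopiv{\tm}{\var}$ by the translation and ask \emph{where} the distinguished subprocess --- a replicated input $!\inp{\var}{\ovar}.\pr$ in point~\ref{p:cbv-conv-ctx-one}, a unary output $\out{\vartwo}{\ovar}$ in point~\ref{p:cbv-conv-ctx-two} --- can sit inside a non-blocking context. Two facts will make this essentially deterministic: a non-blocking context descends only through $\nu$ and $\paral$, so the distinguished subprocess must be exposed within one of the top-level $\paral$-components and cannot be guarded by an input prefix or by a replication; and, by Lemma~\ref{l:cbv-transl-free-names}, the free names of a sub-translation $\lamtopiv{\tmtwo}{\varthree}$ are exactly $\fv{\tmtwo}\disunion\set{\varthree}$, which rules out the impossible placements.

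For point~\ref{p:cbv-conv-ctx-one}: if $\tm=\val$ then $\lamtopiv{\val}{\var}=!\inp{\var}{\ovar}.\lamtopiv{\val}{\ovar}$ and the only non-blocking decomposition exposing a replicated input on $\var$ is the trivial one, so $\varset=\varsettwo=\emptyset$, $\sctxv{\emptyset}:=\ctxhole$, and $\pr=\lamtopiv{\val}{\ovar}$. If $\tm=\val\,\tmtwo$ the hypothesis is vacuously false: the distinguished subprocess would have to be exposed within $\lamtopiv{\val}{\ovartwo}$ (an output or a binary input prefix, never a replication), within a binary output, or within $\lamtopiv{\tmtwo}{\vartwo}$ with $\vartwo$ fresh, but $\var\notin\fv{\val\,\tmtwo}$ forbids a replicated input on $\var$ anywhere inside --- this is also why the application case never contributes to $\varsettwo$, which therefore stays empty. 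If $\tm=\tmthree[\varthree/\tmfour]$ then $\lamtopiv{\tm}{\var}=\nu\varthree(\lamtopiv{\tmthree}{\var}\paral\lamtopiv{\tmfour}{\varthree})$; since $\var\notin\fn{\lamtopiv{\tmfour}{\varthree}}$ the replicated input must lie inside $\lamtopiv{\tmthree}{\var}$, so the non-blocking context factors as $\nu\varthree$ followed by one reaching into $\lamtopiv{\tmthree}{\var}$, and the \ih\ on $\tmthree$ gives $\varsettwo=\emptyset$, a value $\val$ and a substitution context with $\tmthree=\sctxvptwo{\varset'}{\val}$ and $\pr=\lamtopiv{\val}{\ovar}$, where $\varset=\varset'\disunion\set{\varthree}$; I conclude with $\sctxv{\varset}:=\sctxvtwo{\varset'}[\varthree/\tmfour]$.

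For point~\ref{p:cbv-conv-ctx-two}: the value cases are vacuous, since in $\lamtopiv{\val}{\var}=!\inp{\var}{\ovar}.\lamtopiv{\val}{\ovar}$ any unary output is guarded by the leading replication. The productive cases are an application whose head value is a variable and an explicit substitution. If $\tm=\val\,\tmtwo$ with $\val$ a variable, then $\lamtopiv{\tm}{\var}$ exposes the head $\out{\val}{\ovartwo}$, a binary output, and $\lamtopiv{\tmtwo}{\vartwo'}$ with $\vartwo'$ fresh: if the distinguished output is that head, then $\vartwo$ is that variable, $\varsetthree:=\emptyset$, and $\tm=\vartwo\,\tmtwo=\apctxvp{\emptyset}{\vartwo}$; otherwise it lies inside $\lamtopiv{\tmtwo}{\vartwo'}$ and the \ih\ gives $\tmtwo=\apctxvp{\varsetthree}{\vartwo}$, whence I conclude using that evaluation contexts (hence applicative contexts) are closed under $\val\,[-]$, so $\val\,\apctxv{\varsetthree}$ is an applicative context. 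If $\tm=\tmthree[\varthree/\tmfour]$ then $\lamtopiv{\tm}{\var}=\nu\varthree(\lamtopiv{\tmthree}{\var}\paral\lamtopiv{\tmfour}{\varthree})$ and the output lies inside one of the two components; applying the \ih\ to that component and using that evaluation contexts --- hence applicative contexts --- are closed under $[-][\varthree/\tmfour]$ and $\tmthree[\varthree/[-]]$ (directly from the grammar) rebuilds the applicative context. Finally, an application with an abstraction head reduces to the subcase where the output sits inside the argument, since the translation of an abstraction under a \specialv\ name is an input prefix.

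The routine parts aside, the delicate point --- and the one I expect to be the main obstacle --- is the name bookkeeping: tracking, along the induction, which names become \emph{captured} by the reconstructed non-blocking context and which stay free, and checking that the side conditions propagate, \ie\ that $\varsettwo=\emptyset$ is maintained in point~\ref{p:cbv-conv-ctx-one}, and that $\vartwo\notin\varsetthree$ and $\varsetthree\subseteq\varset$ hold in point~\ref{p:cbv-conv-ctx-two} after stripping the outermost $\nu$'s from $\nbvctxp{\varset\disunion\varsettwo}{\cdot}$ and passing the smaller name set to the \ih. Underlying point~\ref{p:cbv-conv-ctx-two} there is also one structural observation to pin down carefully: a free-variable output on a \specialv\ name can be exposed by a non-blocking context only at the head of the translation of an \emph{applied} variable --- a rigidity of the translation which, together with Lemma~\ref{l:cbv-transl-free-names}, is exactly what forces the shape $\apctxvp{\varsetthree}{\vartwo}$.
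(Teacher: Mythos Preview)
Your proposal is correct and follows essentially the same approach as the paper: induction on $\tm$ with the same case analysis (value, application, substitution), the same use of Lemma~\ref{l:cbv-transl-free-names} to rule out impossible placements of the distinguished subprocess, and the same reconstruction of the substitution/applicative context by extending the one from the \ih. Your presentation is slightly more verbose in separating the variable-head and abstraction-head subcases of an application in point~\ref{p:cbv-conv-ctx-two}, whereas the paper folds these into ``$\lamtopiv{\val'}{\ovartwo}=\out{\vartwo}{\ovar}$'' versus ``the hole is in $\lamtopiv{\tmtwo}{\varthree}$'', but the content is the same.
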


\proof
 Both points are by induction on $\tm$:
\begin{itemize}
\item \emph{Value}: if $\tm=\val'$ then $\lamtopiv{\tm}{\var}=!\inp{\var}{\ovar}.\lamtopiv{\val'}{\ovar}$.
\begin{enumerate}
\item Clearly $\Gamma=\Delta=\emptyset$, $\val$ is $\val'$, and $\sctxv{\varset}$ is the empty context.
\item The hypothesis is false, and so there is nothing to prove.
\end{enumerate}

\item \emph{Application}: if $\tm=\val'\tmtwo$ then $\lamtopiv{\val'\tmtwo}{\var}=\nu \ovartwo\nu \varthree  (\lamtopiv{\val'}{\ovartwo}\paral \out{\ovartwo}{\varthree,\var}\paral\lamtopiv{\tmtwo}{\varthree})$ with $\varthree$ and $\ovartwo$ are fresh.
\begin{enumerate}
\item By definition of the translation $\var\notin\fv{\val'\tmtwo}$ and so by Lemma \ref{l:cbv-transl-free-names} $\var\notin\fn{\lamtopiv{\val'}{\ovartwo}}\cup\fn{\lamtopiv{\tmtwo}{\varthree}}$. Consequently, there is no context $\nbvctx{\varset\disunion\varsettwo}$ s. t. $\lamtopiv{\tm}{\var}=\nbvctxp{\varset\disunion\varsettwo}{!\inp{\var}{\ovar}.\pr}$, so the hypothesis is false and there is nothing to prove.

\item Two cases:
\begin{enumerate}
\item $\lamtopiv{\val'}{\ovartwo}=\out{\vartwo}{\ovar}$ and $\nbvctx{\varset\disunion\varsettwo}=\nu \ovartwo\nu \varthree  (\ctxhole\paral \out{\ovartwo}{\varthree,\var}\paral\lamtopiv{\tmtwo}{\varthree})$, which imply $\val'=\vartwo$, $\ovar=\ovartwo$, $\varset=\set{\varthree}$, and $\varsettwo=\set{\ovartwo}$. We conclude taking $\varsetthree:=\emptyset$ and $\apctxv{\emptyset}:=\ctxhole\tmtwo$.

\item The context hole $\ctxhole$ is in $\lamtopiv{\tmtwo}{\varthree}$. Let $\varset':=\varset\setminus\set{\varthree}$ and $\varsettwo':=\varsettwo\setminus\set{\ovartwo}$. If $\lamtopiv{\tm}{\var}=\nbvctxp{\varset\disunion\varsettwo}{\out{\varthree}{\ovar}}$ then $\lamtopiv{\tmtwo}{\var}=\nbvctxptwo{\varset'\disunion\varsettwo'}{\out{\varthree}{\ovar}}$ for some context $\nbvctxtwo{\varset'\disunion\varsettwo'}$. The \ih\ gives $\varsetthree\subseteq\varset'$ and an applicative context $\apctxvtwo{\varsetthree}$ s.t. $\tmtwo=\apctxvptwo{\varsetthree}{\vartwo}$. We conclude taking  $\apctxv{\varsetthree}:=\val'\apctxvtwo{\varsetthree}$. 
\end{enumerate}
\end{enumerate}

\item \emph{Substitution}: if $\tm=\tmtwo[\varthree/\tmthree]$ then $\lamtopiv{\tm}{\var}= \nu \varthree (\lamtopiv{\tmtwo}{\var}\paral \lamtopiv{\tmthree}{\varthree})$.
\begin{enumerate}
\item  By definition of the translation $\var\notin\fv{\tmtwo[\varthree/\tmthree]}$ and so by Lemma \ref{l:cbv-transl-free-names} $\var\in\fn{\lamtopiv{\tmtwo}{\var}}$ and $\var\notin\fn{\lamtopiv{\tmthree}{\varthree}}$. Consequently, the context hole $\ctxhole$ is in $\lamtopiv{\tmtwo}{\var}$, which then writes as $\nbvctxptwo{\varset'\disunion\varsettwo}{!\inp{\var}{\ovar}.\pr}$, with $\varset=\varset'\disunion\set{\varthree}$ for some context $\nbvctxtwo{\varset'\disunion\varsettwo}$. By \ih\ we get that there exist $\val$ and $\sctxvtwo{\varset'}$ s.t. $\pr=\lamtopiv{\val}{\ovar}$ and $\tmtwo=\sctxvptwo{\varset'}{\val}$. We conclude taking $\sctxv{\varset}:=\sctxvtwo{\varset'}[\varthree/\tmthree]$.

\item Two cases:
\begin{enumerate}
\item The context hole $\ctxhole$ is in $\lamtopiv{\tmtwo}{\var}$. Let $\varset':=\varset\setminus\set{\varthree}$. If $\lamtopiv{\tm}{\var}=\nbvctxp{\varset\disunion\varsettwo}{\out{\varthree}{\ovar}}$ then $\lamtopiv{\tmtwo}{\var}=\nbvctxptwo{\varset'\disunion\varsettwo}{\out{\varthree}{\ovar}}$ for some context $\nbvctxtwo{\varset'\disunion\varsettwo}$. The \ih\ gives $\varsetthree'\subseteq\varset'$ and an applicative context $\apctxvtwo{\varsetthree'}$ s.t. $\tmtwo=\apctxvptwo{\varsetthree'}{\vartwo}$. We conclude taking $\varsetthree:=\varsetthree'\disunion\set{\varthree}$ and $\apctxv{\varsetthree}:=\apctxvtwo{\varsetthree'}[\varthree/\tmthree]$. 
\item The context hole is in $\lamtopiv{\tmthree}{\varthree}$. Analogous to the previous case (except that $\varsetthree=\varsetthree'$).
\qed
\end{enumerate}
\end{enumerate}
\end{itemize}

\begin{theorem}[$\tolwa$ strongly simulates $\topid$ via $\lamtopiv{\cdot}{\ovar}$]
\label{l:cbv-reflection}
\hfill
\begin{enumerate}
\item \label{p:cbv-reflection-one} If $\lamtopiv{\tm}{\var}\topidm\prtwo$ then exists $\tmfour$ s.t. $\tm\tolwdb\tmfour$ and $\lamtopiv{\tmfour}{\var}\streq \prtwo$.
\item \label{p:cbv-reflection-two}If $\lamtopiv{\tm}{\var}\topidls\prtwo$ then exists $\tmfour$ s.t. $\tm\tolwv\tmfour$ and $\lamtopiv{\tmfour}{\var}\streq \prtwo$.
\end{enumerate}
\end{theorem}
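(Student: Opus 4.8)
The plan is to mirror the call-by-name argument of Theorem~\ref{l:reflection}: proceed by induction on $\tm$, split on whether $\tm$ is a value, an application, or an explicit substitution, and inside each case determine \emph{where} the given $\topid$-step can take place --- a root communication, or a step internal to a subterm. Root communications are turned back into term redexes by Lemma~\ref{l:cbv-conv-ctx}, and the diagram is then closed exactly as in the corresponding computation of Theorem~\ref{tm:cbv-simulation}; internal steps are handled by the induction hypothesis, using that both $\topid$ and $\streq$ are closed under non-blocking contexts.

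\emph{Value and application.} If $\tm=\val$ then $\lamtopiv{\tm}{\var}=!\inp{\var}{\ovar}.\lamtopiv{\val}{\ovar}$ is a single replicated-input prefix, hence irreducible, and both statements hold vacuously. If $\tm=\val\tmtwo$ then $\lamtopiv{\tm}{\var}=\nu\ovartwo\nu\vartwo(\lamtopiv{\val}{\ovartwo}\paral\out{\ovartwo}{\vartwo,\var}\paral\lamtopiv{\tmtwo}{\vartwo})$ with $\ovartwo,\vartwo$ fresh; here the functional component $\lamtopiv{\val}{\ovartwo}$ is \emph{inert}, being an atomic output when $\val$ is a variable and an input prefix guarding a subprocess when $\val$ is an abstraction. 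Using Lemma~\ref{l:cbv-transl-free-names} and the freshness of $\ovartwo,\vartwo$ one checks that the only $\topid$-redexes of $\lamtopiv{\tm}{\var}$ are: (i)~when $\val=\l\varthree.\tmthree$, the root $\topidm$-communication on $\ovartwo$ between $\lamtopiv{\val}{\ovartwo}$ and $\out{\ovartwo}{\vartwo,\var}$; and (ii)~a step of either kind internal to $\lamtopiv{\tmtwo}{\vartwo}$. In case~(i) the term $\tm=(\l\varthree.\tmthree)\tmtwo$ has a $\tolwdb$-redex, mapping to this communication exactly as in the proof of Theorem~\ref{tm:cbv-simulation}.\ref{p:cbv-simulation-one}, so we take $\tmfour:=\tmthree[\varthree/\tmtwo]$. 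In case~(ii), since $\val\ctxhole$ is an evaluation context, $\tm\tolwa\val\tmtwo'$ whenever $\tmtwo$ makes the corresponding step; the induction hypothesis on $\tmtwo$ (with parameter $\vartwo$) yields that step together with the needed $\streq$, we take $\tmfour:=\val\tmtwo'$, and we conclude because the surrounding context $\nu\ovartwo\nu\vartwo(\lamtopiv{\val}{\ovartwo}\paral\out{\ovartwo}{\vartwo,\var}\paral\ctxhole)$ is non-blocking.

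\emph{Substitution.} If $\tm=\tmtwo[\varthree/\tmthree]$ then $\lamtopiv{\tm}{\var}=\nu\varthree(\lamtopiv{\tmtwo}{\var}\paral\lamtopiv{\tmthree}{\varthree})$. A step internal to $\lamtopiv{\tmtwo}{\var}$ reflects, by the induction hypothesis, a step $\tmtwo\tolwa\tmtwo'$, and since $\ctxhole[\varthree/\tmthree]$ is an evaluation context we get $\tm\tolwa\tmtwo'[\varthree/\tmthree]$; symmetrically a step internal to $\lamtopiv{\tmthree}{\varthree}$ (note that $\lamtopiv{\cdot}{\varthree}$ is still the main translation, since $\varthree$ is a variable) reflects $\tmthree\tolwa\tmthree'$, and since $\tmtwo[\varthree/\ctxhole]$ is an evaluation context, $\tm\tolwa\tmtwo[\varthree/\tmthree']$; in both cases closure of $\topid$ and $\streq$ under the non-blocking contexts $\nu\varthree(\ctxhole\paral\lamtopiv{\tmthree}{\varthree})$, resp.\ $\nu\varthree(\lamtopiv{\tmtwo}{\var}\paral\ctxhole)$, finishes the case. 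The remaining possibility is a communication between the two components, necessarily on the shared bound name $\varthree$: since every binary input produced by the translation has a \specialv\ name as subject while $\varthree$ is an ordinary variable, it cannot be a $\topidm$-step, so it is a $\topidls$-step using an output $\out{\varthree}{\ovar}$ inside $\lamtopiv{\tmtwo}{\var}$ and a replicated input $!\inp{\varthree}{\ovartwo}.\pr$ inside $\lamtopiv{\tmthree}{\varthree}$. By Lemma~\ref{l:cbv-conv-ctx}.\ref{p:cbv-conv-ctx-two} the former forces $\tmtwo=\apctxvp{\varset}{\varthree}$, and by Lemma~\ref{l:cbv-conv-ctx}.\ref{p:cbv-conv-ctx-one} the latter forces $\tmthree=\sctxvp{\varsetthree}{\val}$ with $\pr=\lamtopiv{\val}{\ovartwo}$; hence $\tm=\apctxvp{\varset}{\varthree}[\varthree/\sctxvp{\varsetthree}{\val}]$ has a $\tolwv$-redex on $\varthree$, mapping to this communication exactly as in the proof of Theorem~\ref{tm:cbv-simulation}.\ref{p:cbv-simulation-two}, so we take $\tmfour:=\sctxvp{\varsetthree}{\apctxvp{\varset}{\val}[\varthree/\val]}$.

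\emph{Expected main obstacle.} The delicate point is the exhaustiveness of the classification of redex positions. In the application case one must see that $\lamtopiv{\val}{\ovartwo}$ is always inert --- whether $\val$ is a variable or an abstraction --- so that no reduction is hidden inside the functional component; in the substitution case one must see that exactly one inter-component redex is ever possible, the exponential one, and that its term-level counterpart is precisely the $\tolwv$-rule, whose applicative-context side condition is delivered for free by the shape of Lemma~\ref{l:cbv-conv-ctx}.\ref{p:cbv-conv-ctx-two}. Once the redex has been located, the closing computations are verbatim those of Theorem~\ref{tm:cbv-simulation}, so nothing genuinely new needs to be computed.
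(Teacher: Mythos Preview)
Your proposal is correct and follows essentially the same approach as the paper: induction on $\tm$, case split on value/application/substitution, distinguishing root communications (reflected via Lemma~\ref{l:cbv-conv-ctx} and closed by the computations of Theorem~\ref{tm:cbv-simulation}) from internal steps (handled by the \ih). The only place where the paper is more explicit is the substitution case: your claim that an inter-component communication is ``necessarily on the shared bound name $\varthree$'' is right but not self-evident, since the two components may share other free variable names; the paper spells out that any replicated input produced by the translation has a \emph{parameter} name as subject, and the only parameter name common to both sides is $\varthree$ (while for $\topidm$ it observes via Lemma~\ref{l:cbv-transl-free-names} that neither component has any free \specialv\ name).
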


\proof
By induction on $\tm$. Cases:
\begin{itemize}
 \item \emph{Values}: if $\tm$ is a value then $\lamtopiv{\tm}{\var}$ cannot reduce.
 \item \emph{Application}: if $\tm=\val\tmtwo$ then $\lamtopiv{\val\tmtwo}{\var} = \nu \ovartwo\nu \vartwo  (\lamtopiv{\val}{\ovartwo}\paral \out{\ovartwo}{\vartwo,\var}\paral\lamtopiv{\tmtwo}{\vartwo})$ with $\vartwo$ and $\ovartwo$ fresh. Then:
 \begin{enumerate}
  \item \emph{Multiplicative reduction}. Cases of $\lamtopia{\tm}{\var}\topidm\prtwo$:
  \begin{itemize}
    \item \emph{Root}: $\lamtopiv{\val}{\ovartwo}=\inp{\ovartwo}{\varthree,\varfour}.\pr$ interacts with $\out{\ovartwo}{\vartwo,\var}$ on $\ovartwo$. Clearly, $\val$ is an abstraction $\l \varthree.\tmthree$ with $\lamtopiv{\tmthree}{\varfour}=\pr$, and $\tm=(\l \varthree.\tmthree) \tmtwo$ has a root $\tolwdb$ redex. Then, $\tm$ and $\lamtopiv{\tm}{\var}$ are related exactly as in the proof of Theorem \ref{tm:cbv-simulation}.\ref{p:cbv-simulation-one}. Note that $\ovartwo\notin\fn{\lamtopiv{\tmtwo}{\vartwo}}$ by Lemma \ref{l:cbv-transl-free-names}, and so there cannot be any multiplicative root interaction involving $\lamtopiv{\tmtwo}{\vartwo}$.
    \item \emph{Inductive}: $\lamtopiv{\tm}{\var}\topidm\prtwo$ because $\lamtopiv{\tmtwo}{\vartwo}\topidm\pr$. By \ih\ we get that there exists $\tmfour'$ s.t. $\tmtwo\todb\tmfour'$ and $\lamtopiv{\tmfour'}{\vartwo}\streq\pr$. Since $\val\ctxhole$ is an evaluation contexts, taking $\tmfour:=\val\tmfour'$ we get $\tm\todb\tmfour$ and $\lamtopiv{\tmfour}{\var}\streq\pr$.
  \end{itemize}
  \item \emph{Exponential reduction}. The inductive case (\ie\ $\lamtopiv{\tm}{\var}\topidls\prtwo$ because $\lamtopiv{\tmtwo}{\vartwo}\topidm\pr$) follows by the \ih\ as in the inductive case for multiplicative reductions. In the root case there cannot be any root exponential reduction. Indeed, $\lamtopiv{\val}{\ovartwo}$ would have to be $\out{\varthree}{\ovartwo}$ and $\lamtopiv{\tmtwo}{\vartwo}$ should have a $!\inp{\varthree}{\ovarthree}.\pr$ sub-process. This second requirement is only possible if $\tmtwo$ contains a value $\val$ which in $\lamtopiv{\tmtwo}{\vartwo}$ is translated with respect to $\varthree$, so that $\lamtopiv{\val}{\varthree}=!\inp{\varthree}{\ovarthree}.\pr$. But this is impossible because $\vartwo$ is fresh (and so $\vartwo\neq\varthree$) and any variable name which is used as a parameter in the translation of a subterm of $\tmtwo$ is either $\vartwo$ or it is introduced fresh (and so cannot be equal to $\varthree$).
 \end{enumerate}
\item \emph{Substitution}: if $\tm=\lamtopiv{\tmtwo[\vartwo/\tmthree]}{\var}$ then $\lamtopiv{\tm}{\var}=\nu \vartwo (\lamtopiv{\tmtwo}{\var}\paral \lamtopiv{\tmthree}{\vartwo})$

\begin{enumerate}
  \item \emph{Multiplicative reduction}. If the reduction takes place in $\lamtopiv{\tmtwo}{\var}$ or $\lamtopiv{\tmthree}{\vartwo}$ we use the \ih\ as in the previous inductive cases. And there cannot be any root multiplicative reduction. Indeed, it should be along a \specialv\ name $\ovar$ free in both $\lamtopiv{\tmtwo}{\var}$ and $\lamtopiv{\tmthree}{\vartwo}$, but by Lemma \ref{l:cbv-transl-free-names} $\lamtopiv{\tmtwo}{\var}$ and $\lamtopiv{\tmthree}{\vartwo}$ have no free \specialv\ name.

  \item \emph{Exponential reduction}. If the reduction takes place in $\lamtopiv{\tmtwo}{\var}$ or $\lamtopiv{\tmthree}{\vartwo}$ we use the \ih\ as in the previous inductive cases.
 
 Otherwise, an exponential reduction can only be along a variable name $\varthree$ which is free in both $\lamtopiv{\tmtwo}{\var}$ and $\lamtopiv{\tmthree}{\vartwo}$. Then $\varthree\neq\var$, because $\var\notin\fn{\lamtopiv{\tmthree}{\vartwo}}$. Another requirement is that $\varthree$ has to be used as the parameter of the translation of a value $\val$, which is the only way to get a replicated input. The only possibility then is that $\varthree=\vartwo$, because all variable parameter names used in the translation and different from $\var$ and $\vartwo$ are fresh and cannot be in both $\lamtopiv{\tmtwo}{\var}$ and $\lamtopiv{\tmthree}{\vartwo}$.
    
    Now, $\lamtopiv{\tmtwo}{\var}$ has to be of the form $\nbvctxp{\varset\disunion\varsettwo}{\out{\vartwo}{\ovar}}$ and $\lamtopiv{\tmthree}{\vartwo}$ has to be of the form $\nbvctxptwo{\varset'\disunion\varsettwo'}{!\inp{\vartwo}{\ovartwo}.\pr}$, for some sets of variable names $\varset$ and $\varset'$ and some sets of \specialv\ names $\varsettwo$ and $\varsettwo'$, and with $\vartwo\notin\varset\cup\varset'$. By Lemma 
    \ref{l:cbv-conv-ctx} we get $\varsettwo'=\emptyset$ and that exist $\val$, $\sctxv{\varset'}$, $\varsetthree\subset\varset$, and $\apctxv{\varsetthree}$ s.t. $\pr=\lamtopiv{\val}{\ovartwo}$, $\tmthree=\sctxvp{\varset'}{\val}$, and $\tmtwo=\apctxvp{\varsetthree}{\vartwo}$. Summing up, $\tm=\apctxvp{\varsetthree}{\vartwo}[\vartwo/\sctxvp{\varset'}{\val}]$ and it has a $\tolwv$ redex which maps on $\lamtopia{\tm}{\var}\topidls\prtwo$ exactly as in the proof of Theorem \ref{tm:cbv-simulation}.\ref{p:cbv-simulation-two}.
  \qed
  \end{enumerate}  
  \end{itemize}

%

\section*{Conclusions}
We have shown how to refine the relation between the $\l$-calculus and the $\pi$-calculus, getting a perfect match of reductions steps in both call-by-name and call-by-value. The refinements crucially exploits rewriting rules at a distance, and unveil that the $\pi$-calculus evaluates $\l$-terms exactly as linear logic proof nets. A natural continuation  would be to extend these relations to calculi with multiplicities \cite{DBLP:journals/iandc/BoudolL96}, which are related to the study of observational equivalence. It would also be interesting to investigate linear weak applicative reduction, in particular in relation with complexity \cite{DBLP:conf/rta/AccattoliL12} or with Taylor-Ehrhard expansion \cite{DBLP:conf/csl/Ehrhard12}. Finally, given the compactness of the results and the involved reasoning about bound, free, and fresh variables, it would be interesting to try to formalize this work in Abella \cite{DBLP:conf/cade/Gacek08}, which is a proof assistant provided with a nominal quantifier precisely developed to cope with the $\pi$-calculus \cite{DBLP:journals/tocl/TiuM10} and where reasoning about untyped calculi with binders is very close to pen-and-paper reasoning \cite{DBLP:conf/cpp/Accattoli12}.


\end{document}